\documentclass[a4paper,UKenglish,cleveref, autoref, thm-restate]{lipics-v2021}

\title{Loop Termination and Generalized Collatz Sequences} 

\author{Mishel Carelli}{CISPA Helmholtz Center for Information Security, Germany}{mishel.carelli@cispa.de}{https://orcid.org/0009-0000-2181-8205}{This
work was supported by the European Research Council (ERC) Grant HYPER
(No. 101055412).}

\usepackage{tikz}

\authorrunning{Mishel Carelli} 

\Copyright{Mishel Carelli} 

\ccsdesc[500]{Theory of computation~Logic and verification} 

\keywords{Program Verification, Loop Termination, Generalized Collatz Sequences, Linear-Constraint Loops} 

\category{Track B: Automata, Logic, Semantics, and Theory of Programming} 

\relatedversion{} 

\acknowledgements{I want to thank Arkadiy Aliev, Bernd Finkbeiner, and Joël Ouaknine for the useful discussions.}

\def\ZZ{\mathbb{Z}}
\def\NN{\mathbb{N}}
\def\QQ{\mathbb{Q}}
\def\RR{\mathbb{R}}
\def\lim{{\varprojlim}}

\def\->{\rightarrow}

\newcommand{\sign}{\operatorname{sign}}
\newcommand{\cl}{\operatorname{cl}}
\newcommand{\conv}{\operatorname{conv}}
\newcommand{\nonneg}{\operatorname{nonneg}}
\newcommand{\rec}{\operatorname{rec}}
\newcommand{\dist}{\operatorname{dist}}
\newcommand{\height}{\operatorname{height}}

\hideLIPIcs

\nolinenumbers

\newtheorem{question}[theorem]{Question}

\EventEditors{Sayan Bhattacharya, Danupon Nanongkai, Michael Benedikt, and Gabriele Puppis}
\EventNoEds{4}
\EventLongTitle{53rd International Colloquium on Automata, Languages, and Programming (ICALP 2026)}
\EventShortTitle{ICALP 2026}
\EventAcronym{ICALP}
\EventYear{2026}
\EventDate{July 7--10, 2026}
\EventLocation{Royal Holloway, University of London, Egham, United Kingdom}
\EventLogo{}
\SeriesVolume{374}
\ArticleNo{175}

\begin{document}

\maketitle

\begin{abstract}
Linear-constraint loops are programs whose transition relation is specified by a system of linear inequalities. The termination problem asks, given a loop, whether it admits an infinite computation. Decidability of termination remains open for linear-constraint loops over integers, rationals, and reals. We focus on loops over integers and show that they are tightly connected to generalized Collatz sequences -- integer sequences generated by maps that are linear on each residue class modulo a fixed natural number. We prove that termination of one-variable linear-constraint loops is decidable in polynomial time, provided a long-standing conjecture about generalized Collatz sequences holds. Conversely, we show that any decision procedure for one-variable loops would prove or refute specific instances of this conjecture, which remain open. Moreover, we show that if a one-variable loop has a cyclic trace, then it also has a cyclic trace of length at most two.
\end{abstract}

\section{Introduction}

Termination analysis is a long-standing topic in theoretical computer science. The Halting Problem -- deciding whether a Turing machine halts on a given input -- is undecidable, going back to Turing’s original work. Nevertheless, termination can become decidable when restricted to certain classes of programs. 

In this paper, we analyze the open problem of termination for
Single-Path Linear-Constraint Loops (SLC). This problem originates in program analysis and verification, where incomplete but efficient approaches such as ranking functions \cite{inproceedings} and transition invariants \cite{1319598} are used as building blocks in verification tools to prove termination and other liveness properties \cite{10.1007/11817963_37}, \cite{1319598}, \cite{10.1007/978-3-319-02444-8_26} and in Horn Clauses solvers \cite{10.1007/978-3-642-39799-8_61}, \cite{10.1007/978-3-030-81685-8_35}.

SLCs are loops of the form
\begin{equation} \label{SLC} \textbf{while} \ (C\boldsymbol{x}\leq d) \ \textbf{do} \ A \left( \begin{array}{c} \boldsymbol{x} \\ \boldsymbol{x}' \end{array} \right) \leq b. 
\end{equation}
Here, $\boldsymbol{x}$ is the vector of values of the program variables in the current step, and $\boldsymbol{x}'$ denotes their values in the next step.
Moreover, $A$ and $C$ are matrices with rational coefficients, and $b$ and $d$ are rational vectors. Later in the paper, we introduce an equivalent definition with the loop guard omitted. The decision problem is: given such a loop, does there exist an infinite computation with program variables taking values in $\ZZ$ from some initial state? 

The question of whether this problem is decidable was posed two decades ago \cite{10.1007/11817963_34}, and remains open to this day, not only for loops over $\ZZ$, but also over $\QQ$ and $\RR$. Recently, decidability over $\RR$ has been established in the two-variable case \cite{guilmant_et_al:LIPIcs.ICALP.2024.140}.

Decidability of termination varies across loop classes. We now discuss two additional classes of linear loops to illustrate the current landscape.

On the positive side, a prominent decidable class of linear loops called
\emph{affine} SLCs, in which the transition is deterministic and has the form
\[\mathbf{while}\ (C\boldsymbol{x} \le d)\ \mathbf{do}\ \boldsymbol{x}^{\prime} := A\boldsymbol{x} + b.\]
Termination of affine SLCs has been proved decidable over $\RR$ \cite{Tiwari04:CAV} -- via spectral analysis of $A$ -- over $\QQ$ \cite{10.1007/11817963_34}, by symbolic reasoning about matrix powers through eigenvalues, and over $\ZZ$ \cite{hosseini_et_al:LIPIcs.ICALP.2019.118}, using different tools from geometry of numbers, such as the Khachiyan–Porkolab result on deciding whether a convex semi-algebraic set contains an integer point.

On the negative side, another relevant class of linear loops called \emph{Multi-Path Linear-Constraint Loops}, which allows conditional choice among several linear updates:
\[
\mathbf{while}\ (\text{true})\ \mathbf{do}\
\boldsymbol{x}^{\prime} :=
\begin{cases}
A_1\boldsymbol{x}+b_1 & \text{if } C_1\boldsymbol{x}\le d_1,\\
\vdots\\
A_k\boldsymbol{x}+b_k & \text{if } C_k\boldsymbol{x}\le d_k.
\end{cases}
\]
Termination of such multi-path loops is undecidable over $\RR$ and $\QQ$ \cite{10.1016/S0304-3975(00)00399-6} and over $\ZZ$ \cite{benamram:LIPIcs.STACS.2013.514}, even for $k=2$, via encodings of counter machines.

In \cite{10.1007/11817963_34}, Braverman asked: ``How much nondeterminism can be introduced in a linear loop with no initial conditions before termination becomes undecidable?’’ Motivated by this question and in light of the subsequent results, SLCs occupy an intriguing middle ground between deterministic, decidable affine SLCs and highly expressive, undecidable multi-path loops. Because nondeterminism in SLCs is convex, it does not allow for modeling the unstructured nondeterminism of counter machines, at least not in a straightforward way.


In~\cite{10.1145/2400676.2400679}, the authors show that allowing a single irrational coefficient makes SLC termination undecidable over $\ZZ$, and they describe the SLC termination problem as ``the most intriguing'' among loop-termination problems.

For a complete explanation of these results and further references, we refer the reader to the recent survey on linear loop termination \cite{benamram2025terminationanalysislinearconstraintprograms}.


In this work, we study the SLC termination problem for the case of one variable. While this restriction may seem severe, we show that this problem is deeply non-trivial even for one-variable loops, connecting it with a long-standing number theoretic conjecture about \emph{generalized Collatz sequences}.

The study of \emph{Generalized Collatz sequences} is a long-standing topic in number theory \cite{Matthews1984}, \cite{Matthews1985AMA}, \cite{Matthews1992SomeBM}, \cite{Mller1978berHV}, \cite{Leigh1986AMP}, \cite{matthews2010generalized}. These integer sequences are generated by generalized Collatz mappings of the form
$$T(x) = 
\frac{m_i x-r_i}{d}, \ \ \ \ \text{if }  x\equiv i\text{ (mod } d)$$
The \textit{Uniform Distribution conjecture} \cite{Matthews1984} asserts that every unbounded generalized Collatz sequence is uniformly distributed modulo $d^{\alpha}$ for every $\alpha\in\NN$. We formulate a weaker statement, the \emph{Reachability Conjecture}, which posits that every sequence generated by generalized Collatz mappings of a certain type called weak Collatz mappings, reaches a certain residue class modulo $d$ at least once. We prove the Reachability Conjecture for $d=2$.

Our main result shows that, if the Reachability Conjecture holds, then termination of one-variable SLCs is decidable in polynomial time. 

For the other direction, we show that any algorithm deciding termination of one-variable SLCs would, given a weak Collatz mapping, determine whether the Reachability Conjecture holds for all sequences generated by that mapping. All cases with modulus $d>2$ remain open. This suggests that further progress on SLC decidability will require a deeper understanding of generalized Collatz sequences.

Let us give the high-level explanation of the results of this paper and its structure. 

\textbf{Section~\ref{preliminaries}} introduces the basic definitions and notation. In particular, we define the SLC termination problem and split it into two subproblems: deciding the existence of \emph{cycles} (finite traces from some state to itself), and deciding the existence of \emph{self-avoiding traces} (infinite traces that never repeat a state).

\textbf{Section~\ref{cycles}} studies cycles. We show that deciding the existence of a cycle of a bounded length can be reduced to an instance of \emph{Integer Linear Programming} problem. The main result of this section is that for a one-variable loop, the existence of any cycle implies the existence of a cycle of length at most two; hence cycle-detection problem is decidable for one-variable loops.

\textbf{Section~\ref{collatz}} introduces generalized Collatz sequences, and states the Uniform Distribution and Reachability Conjectures. We prove the Reachability conjecture for the case $d=2$, and we prove that a one-variable SLC termination-checking algorithm, if it existed, could verify specific open instances of this conjecture.

\textbf{Section~\ref{unbounded}} analyzes self-avoiding traces and shows that, assuming the Reachability Conjecture, termination of one-variable SLCs is decidable in polynomial time. The proof is geometric in nature. We view the transition relation as a polyhedron in $\RR^2$ and analyze its \emph{Minkowski-Weyl decomposition}. Intuitively, we identify where the transition polyhedron ''points to''. Simplifying, the self-avoiding trace exists if the polyhedron points in the direction where the absolute value of the variable grows. However, several corner cases arise, in one of which, the trace of SLC forms a weak Collatz sequence, which avoids a certain residue class modulo $d$. If the Reachability conjecture holds, we can conclude that this kind of execution needs to terminate. Otherwise, the presented approach results in a semi-algorithm, which does not give an answer in these specific cases.

\textbf{Section~\ref{proofs}} presents the formal proofs of the results from Sections~\ref{cycles}, \ref{collatz}, and~\ref{unbounded}.

\textbf{Section~\ref{conclusion}} summarizes the results.

\section{Preliminaries}\label{preliminaries}

\subsection{Notation}
We denote the sets of natural numbers, integers, rationals, and reals by $\NN, \ZZ, \QQ$, and $\RR$, respectively. The sets of positive, negative, and nonnegative real numbers are denoted by $\RR_{+}$, $\RR_{-}$, and $\RR_{\ge 0}$.

Given $S \subseteq \RR^{n}$, we write $S_{\ZZ} := S \cap \ZZ^{n}$ for its set of integer points.

Given $T \subseteq \RR^{2n}$ and $z \in \RR^{n}$, define the slice
$T_{z} := \{ x_{2} \in \RR^{n} \mid (z, x_{2}) \in T \}$.

Vectors $v_{1}, v_{2} \in \RR^{n}$ are \emph{collinear} if there exists $r \in \RR$ such that $v_{1} = r v_{2}$; we write $v_{1} \parallel v_{2}$. If $v_{1}$ and $v_{2}$ are not collinear, we write $v_{1} \nparallel v_{2}$.

For $p, q \in \ZZ$, $\gcd(p,q)$ denotes the greatest natural number that divides both $p$ and $q$.

For $x\in \RR$ define

\[
\sign(x):=
\begin{cases}
  1 & \text{if } x>0,\\
 -1 & \text{if } x<0,\\
  0 & \text{if } x=0.
\end{cases}
\]

\subsection{Convex sets and the Minkowski–Weyl theorem}

Given $S_{1}, S_{2} \subseteq \RR^{n}$, the \emph{Minkowski sum} is
\[S_{1} + S_{2} := \{\, y + z \mid y \in S_{1},\, z \in S_{2} \}.\]

A set $K \subseteq \RR^{n}$ is \emph{convex} if for all $x,y \in K$ and all $\lambda \in [0,1]$,
we have $\lambda x + (1-\lambda) y \in K$.

The \emph{topological closure} of $S \subseteq \RR^{n}$ is denoted $\cl(S)$ (the set of limits of the sequences of elements of $S$). A set $S$ is \emph{closed} if $S = \cl(S)$.

Intersections of families of closed sets are closed, and intersections of families of convex sets are convex.

\begin{definition}
Let $S \subseteq \RR^n$.
\begin{enumerate}
    \item A convex hull of $S$ is defined by
    \[\conv(S) = \{ \sum_{i=1}^k \alpha_i v_i  \mid k\in \NN, \ \alpha_i \in [0;1],  \ v_i\in S , \ \sum_{i=1}^k \alpha_i = 1 \} \]
    \item A nonnegative (conic) hull is defined by
    \[ \nonneg(S) =  \{ \sum_{i=1}^k \alpha_i v_i  \mid k\in \NN,\ \alpha_i \in \RR_{\geq 0},  \ v_i\in S \}\]
\end{enumerate}
\end{definition}

A \emph{convex cone} is a set $C \subseteq \RR^{n}$ such that for every $c_{1}, c_{2} \in C$ and every $\lambda \in \RR_{\ge 0}$, we have $\lambda c_{1} \in C$ and $c_{1} + c_{2} \in C$. Every convex cone is a convex set.

\begin{definition}
Let $K \subseteq \RR^{n}$ be a nonempty convex set. The \emph{recession cone} of $K$ is
\[
\rec(K) := \{\, v \in \RR^{n} \mid K + \RR_{\ge 0} v \subseteq K \,\}.
\]
\end{definition}

For every nonempty convex $K$, $\rec(K)$ is a closed convex cone. A cone $K$ is pointed if $K\cap(-K) = 0$.

The following classical result appears in \S8.9 of \cite{10.5555/17634}.

\begin{theorem}[Minkowski–Weyl Theorem]\label{Minkowski-Weyl}
Let $P \subseteq \RR^{n}$. The following are equivalent:
\begin{enumerate}
\item $P$ is a polyhedron, i.e., $P = \{ x \in \RR^{n} \mid A x \le b \}$ for some $A \in \RR^{k \times n}$ and $b \in \RR^{k}$.
\item There exist $w_1,\dots w_l \in P$ and $v_1,\dots, v_m \in \RR^n$, such that
        \[ P = \conv(w_1,\dots,w_l) + \rec(P) =  \conv(w_1,\dots,w_l) + \nonneg(v_1,\dots,v_m) \]
\end{enumerate}
Moreover, if $P$ is given by linear inequalities with rational coefficients, then $w_{i}$ and $v_{i}$ can be chosen with rational entries.
\end{theorem}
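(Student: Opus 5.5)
The plan is to prove the two implications separately: (1)$\Rightarrow$(2) by a reduction to pointed polyhedra and a vertex argument, and (2)$\Rightarrow$(1) by projecting a lifted polyhedron.

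For (1)$\Rightarrow$(2), let $P=\{x\mid Ax\le b\}$ be nonempty. First I identify the recession cone as $\rec(P)=\{v\mid Av\le 0\}$. If $Av\le 0$, then $A(x+tv)\le b$ for all $t\ge 0$. Conversely, if some row satisfies $a_i^\top v>0$, then $x+tv$ violates that row for large $t$.

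Next I split off the lineality space $L=\{v\mid Av=0\}$ and write $P=(P\cap L^{\perp})+L$. The polyhedron $P':=P\cap L^{\perp}$ is pointed, so it has finitely many vertices. Each vertex is the unique solution of a subsystem of $n$ linearly independent tight rows, which gives at most $\binom{k}{n}$ of them; call them $w_1,\dots,w_l$. These lie in $P$, and they are rational when $A,b$ are rational, by Cramer's rule.

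The main step is to show $P'=\conv(w_i)+\rec(P')$, by induction on the dimension of the minimal face containing a point $x\in P'$. If $x$ is not a vertex, pick a nonzero direction $u$ in the null space of the rows tight at $x$ (and in $L^\perp$). Move along $\pm u$ until a new constraint becomes tight. If both directions are blocked, $x$ is a convex combination of two points on strictly smaller faces. If one direction, say $u$, is unbounded, then $u$ is a recession direction, and $x$ equals a point on a smaller face plus a nonnegative multiple of $u$. Induction then closes the argument.

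For the cone, $\{v\mid Av\le 0\}$ is itself a polyhedron. Its intersection with $L^{\perp}$ and a normalizing hyperplane $\mathbf{1}^\top$-type slice is a bounded polytope, whose finitely many vertices generate the pointed part. Adding $\pm$ a basis of $L$ then gives generators $v_1,\dots,v_m$ with $\rec(P)=\nonneg(v_1,\dots,v_m)$. These are rational in the rational case.

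For (2)$\Rightarrow$(1), the set $\conv(w_i)+\nonneg(v_j)$ is the projection onto the $x$-coordinates of the polyhedron
\[\{(x,\alpha,\beta)\mid x=\textstyle\sum\alpha_iw_i+\sum\beta_jv_j,\ \alpha,\beta\ge0,\ \sum\alpha_i=1\}.\]
Fourier--Motzkin elimination of $\alpha,\beta$ preserves polyhedrality and rationality, so the projection is a polyhedron. I expect the main obstacle to be the face-induction step for (1)$\Rightarrow$(2), specifically the careful handling of non-pointed $P$ via $L$. Everything else is routine.

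Since this is a classical theorem cited from \cite{10.5555/17634}, in the paper one would simply invoke the reference rather than reprove it.
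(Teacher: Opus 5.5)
The paper gives no proof of this theorem; it cites Schrijver (\S8.9). In that source the decomposition is derived from the cone version of the Farkas--Minkowski--Weyl theorem: one homogenizes $P$ to the cone $\{(x,\lambda)\mid \lambda\ge 0,\ Ax\le\lambda b\}$ and then refines using minimal faces and the lineality space. Your argument is correct and takes a genuinely different, self-contained route:
\begin{itemize}
\item you split off $L=\{v\mid Av=0\}$;
\item you get the $w_i$ from an induction that walks each point of the pointed part to a vertex;
\item you get the $v_j$ by slicing the pointed recession cone;
\item you get (2)$\Rightarrow$(1) by Fourier--Motzkin projection.
\end{itemize}
Your route avoids Farkas-type duality entirely and makes finiteness and rationality explicit, via Cramer's rule, rational bases of $L$, and the fact that Fourier--Motzkin stays in $\QQ$. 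The homogenization route is shorter once the cone theorem is available. A single finitely-generated-cone statement produces the $w_i$ and the $v_j$ at once: rescale the generators so that their last coordinate is $1$ or $0$. Intersecting with $\lambda=1$ then also gives the converse.

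Two steps should be made precise.

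First, in the face induction, rule out explicitly that both $u$ and $-u$ are unbounded directions. That would force $Au\le 0$ and $-Au\le 0$, so $u\in L\cap L^{\perp}=\{0\}$. This is exactly where pointedness of $P'$ is used, and your write-up never invokes it.

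Second, the normalizing hyperplane cannot literally be $\mathbf{1}^\top v=1$. For the pointed cone $\{v\in\RR^2\mid v_1\ge 0\ge v_2\}$ that slice is unbounded and misses the ray through $(1,-2)$. Use instead $c:=-A^\top\mathbf{1}$.
\begin{itemize}
\item For $v\in C':=\{v\mid Av\le 0\}\cap L^{\perp}$ we have $c^\top v=-\mathbf{1}^\top Av\ge 0$, with equality only if $Av=0$, i.e.\ $v=0$. So every nonzero $v\in C'$ rescales into $Q:=C'\cap\{v\mid c^\top v=1\}$.
\item $Q$ is bounded, since its recession cone is $\{d\in C'\mid c^\top d=0\}=\{0\}$.
\item Your vertex argument then gives $Q=\conv(\text{vertices of }Q)$, hence $C'=\nonneg(\text{vertices of }Q)$.
\item $c$ is rational whenever $A$ is.
\end{itemize}

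Two minor points do not affect the argument. Vertices of $P'$ are cut out by $n-\dim L$ independent tight rows of $A$ together with the equations defining $L^{\perp}$, not by $n$ rows of $A$. The identity $\rec(P)=\rec(P')+L$ needs the same one-line orthogonal decomposition you used for $P=P'+L$.
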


Note that any polyhedron $P \subseteq \RR^{n}$ is closed and convex, because every linear inequality defines a closed half-space in $\RR^{n}$ and $P$ is an intersection of finitely many half-spaces.

The following result trivially follows from basic results in convex geometry, but we present the proof of this exact formulation for the sake of completeness.

\begin{proposition}\label{classification}
The following statements hold for closed convex cones $C \subseteq \RR^{2}$.
\begin{enumerate}
\item $C$ can be represented as $\nonneg(v_{1},\dots,v_{m})$ for some $v_{i} \in \RR^{2}$ with $m \le 3$.
\item If $v_{1}, v_{2} \in \RR^{2}$ and $v_{1} \nparallel v_{2}$, then $\nonneg(v_{1},v_{2})$ contains no line in $\RR^{2}$.
\item If $C$ cannot be generated by two vectors, then $C$ is either a half-plane or the whole plane.
\end{enumerate}
\end{proposition}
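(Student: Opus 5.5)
I will prove the three parts in the order (2), then (1) and (3) together, via a case analysis on the lineality space $L := C \cap (-C)$. This is a linear subspace of $\RR^{2}$, so its dimension is $0$, $1$, or $2$.

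\textbf{Part (2).} Suppose $\nonneg(v_{1},v_{2})$ contains a line $\{p + t u \mid t \in \RR\}$ with $u \neq 0$. Since the cone is closed and convex, its recession cone is itself, so both $u$ and $-u$ lie in $\nonneg(v_{1},v_{2})$. Write $u = a v_{1} + b v_{2}$ and $-u = c v_{1} + e v_{2}$ with $a,b,c,e \ge 0$. Adding gives $(a+c)v_{1} + (b+e)v_{2} = 0$. Because $v_{1} \nparallel v_{2}$, the vectors are linearly independent, so $a+c = b+e = 0$. Hence all coefficients vanish and $u = 0$, a contradiction.

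\textbf{Parts (1) and (3).} I split on $\dim L$.

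If $\dim L = 2$, then $C = \RR^{2}$, which is generated by $e_{1}, e_{2}, -e_{1}-e_{2}$.

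If $\dim L = 1$, say $L = \RR u$, let $w$ be orthogonal to $u$ and project $C$ onto $\RR w$. The projection is a closed convex cone in a line, so it is one of $\{0\}$, a ray, or the whole line. Since $C = L + (C \cap \RR w)$, we get three possibilities:
\begin{itemize}
\item $C = L = \nonneg(u,-u)$;
\item $C$ is a closed half-plane $\nonneg(u,-u,w)$;
\item $C = \RR^{2}$, which is excluded because then $\dim L = 2$.
\end{itemize}

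If $\dim L = 0$, then $C$ is pointed. If $C = \{0\}$, it is $\nonneg(0)$. Otherwise, consider the set $S = C \cap \{x : \|x\| = 1\}$. It is compact and, since $C$ is pointed, it lies in an open half-plane; I would obtain such a half-plane by a separation argument between $C\setminus\{0\}$ and $-C$. So $S$ is a closed arc of angular length less than $\pi$. Its endpoints $v_{1}, v_{2}$ satisfy $C = \nonneg(v_{1},v_{2})$ by convexity, since every unit vector on the arc between them is a nonnegative combination of $v_{1}$ and $v_{2}$.

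This case analysis gives (1) with $m \le 3$. It also gives (3): the only cases that need more than two generators are the half-plane and the whole plane. The line $L$ and the pointed cones are all generated by at most two vectors.

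\textbf{Main obstacle.} The pointed case is where the real work lies, specifically showing that $S$ is a single arc of angle less than $\pi$. For this I would argue in polar angles. $S$ is closed and connected, because convexity of $C$ makes its angle set convex on the circle. Pointedness excludes antipodal points, so the arc length is less than $\pi$. The remaining steps are routine.
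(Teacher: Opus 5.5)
Your proof is correct and follows the same route as the paper: a case split on the lineality space $C\cap(-C)$ into three cases, namely the whole plane, a line (which gives either the line itself or a half-plane $\nonneg(u,-u,w)$), and the pointed case, where two generators suffice. The only difference is that you make it self-contained: you prove (2) directly from the linear independence of $v_1,v_2$, derive $C = L + (C\cap\RR w)$ by hand, and settle the pointed case with the polar-angle arc argument, whereas the paper cites Schrijver for the decomposition and Telen for the pointed case.
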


\begin{proof}
By \S8.9 of \cite{10.5555/17634}, $C$ decomposes as $C = C^{\prime} + \operatorname{lin}(C)$, where $\operatorname{lin}(C) := C \cap (-C)$ is the lineality space and $C^{\prime}$ is a pointed cone.

If $\operatorname{lin}(C) = \RR^{2}$, then $C = \RR^{2} = \nonneg((0,1),(1,0),(-1,-1))$.

Assume $\operatorname{lin}(C)$ is a line, say $\nonneg(v,-v)$, and $C \ne \RR^{2}$. Then $\RR^{2} = H_{1} \cup H_{2} \cup \operatorname{lin}(C)$, where $H_{1}$ and $H_{2}$ are the open half-planes with boundary $\operatorname{lin}(C)$. If $C^{\prime} \cap H_{i} \ne \emptyset$ for some $i \in {1,2}$, then $H_{i} \subseteq C$. Hence $C$ is either the line $\nonneg(v,-v)$ or, if there exists $v^{\prime} \in C^{\prime} \setminus \operatorname{lin}(C)$, a half-plane $\nonneg(v,-v,v^{\prime})$.

Finally, if $\operatorname{lin}(C) = {0}$, then $C = C^{\prime}$ is pointed and therefore generated by two vectors (see, e.g., Example 2.39 in \cite{telen2022introductiontoricgeometry}); in this case, it cannot contain a line by definition of pointedness.
\end{proof}

The next statement is Corollary 8.3.3 in \cite{rockafellar2015convex}.

\begin{proposition}[Recession cone of an intersection]\label{infcap}
If $P_{1}$ and $P_{2}$ are closed convex sets with $P_{1} \cap P_{2} \ne \emptyset$, then $\rec(P_{1} \cap P_{2}) = \rec(P_{1}) \cap \rec(P_{2})$.
\end{proposition}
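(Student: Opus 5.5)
The plan is to prove the two inclusions separately, using only the definition of the recession cone and closedness. Throughout, set $K := P_1 \cap P_2$. Since $P_1, P_2$ are closed and convex, $K$ is closed and convex, and $K$ is nonempty by hypothesis. So $\rec(K)$, $\rec(P_1)$ and $\rec(P_2)$ are all defined.

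\emph{The inclusion $\rec(P_1)\cap\rec(P_2)\subseteq \rec(K)$.} This direction is immediate. Let $v$ lie in both recession cones, and take $x\in K$ and $\lambda\ge 0$. Then $x+\lambda v\in P_1$ because $v\in\rec(P_1)$, and $x+\lambda v\in P_2$ because $v\in\rec(P_2)$. Hence $x+\lambda v\in K$.

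\emph{The inclusion $\rec(K)\subseteq \rec(P_i)$.} This is the substantive direction. A point of $P_i$ need not lie in $K$, so we must transport the recession property from points of $K$ to arbitrary points of $P_i$. The key step is the standard lemma: for a nonempty closed convex set $P$, a vector $v$ lies in $\rec(P)$ as soon as there is a single point $x_0\in P$ with $x_0+\lambda v\in P$ for all $\lambda\ge 0$.

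To prove the lemma, take any $y\in P$ and $\lambda\ge 0$, and consider for $t\ge\lambda$ the convex combination
\[
z_t := \Bigl(1-\tfrac{\lambda}{t}\Bigr) y + \tfrac{\lambda}{t}\,(x_0+t v) = y + \lambda v + \tfrac{\lambda}{t}(x_0-y).
\]
Each $z_t$ lies in $P$ by convexity, since both $y$ and $x_0+tv$ are in $P$. As $t\to\infty$ we have $z_t\to y+\lambda v$, and closedness of $P$ gives $y+\lambda v\in P$.

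Now let $v\in\rec(K)$ and fix any $x_0\in K$, which exists by nonemptiness. Then $x_0+\lambda v\in K\subseteq P_i$ for all $\lambda\ge0$, so the lemma applied to $P=P_i$ gives $v\in\rec(P_i)$ for $i=1,2$. Combining the two inclusions yields the equality.

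The only real obstacle is the lemma. It is exactly where closedness of the $P_i$ and nonemptiness of $P_1\cap P_2$ are used; without closedness the statement fails, for instance for non-closed convex sets whose intersection has a larger recession cone. The limiting convex-combination argument above is what I would use to handle it.
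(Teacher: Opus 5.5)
Your proof is correct. The paper gives no argument of its own for this statement; it only cites Corollary~8.3.3 of Rockafellar. The substance of that corollary is exactly your key lemma, which in Rockafellar is Theorem~8.3: if a nonempty closed convex set contains one half-line $x_0+\RR_{\ge 0}v$, then $v$ lies in its recession cone. Once that lemma is available, both inclusions follow as you write them.

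The only difference is how the lemma is obtained. Rockafellar deduces it from his description of the recession cone of a closed convex set as the set of limits of sequences $\lambda_k x_k$ with $x_k\in P$ and $\lambda_k\downarrow 0$, which he proves via the closure of a homogenizing cone in $\RR^{n+1}$. Your convex combination $z_t$ with $t\to\infty$ proves the lemma directly and elementarily, so it makes the paper self-contained at this point. Your argument also extends verbatim to arbitrary families of closed convex sets, as in the cited corollary.

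Two small remarks. First, for $\lambda=0$ the claim is trivial, and for $\lambda>0$ you have $t\ge\lambda>0$, so dividing by $t$ is harmless. Second, your closing comment on non-closed sets is right. For example, $P_1=\{(x_1,x_2)\in\RR^2\mid x_2>0\}\cup\{(0,0)\}$ and $P_2=\{(x_1,x_2)\in\RR^2\mid x_2\ge 1\}$ give $(1,0)\in\rec(P_1\cap P_2)\setminus\rec(P_1)$.
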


\subsection{Loops}

We consider Single-Path Linear-Constraint Loops (\textit{SLC}). Throughout, SLCs are over the integers.

The vector of variables is $\boldsymbol{x} = (x_{1},\dots,x_{n})^{T}$. A \emph{state} is a valuation over integers, i.e., a vector in $\ZZ^{n}$. The vector of variables at the next step is $\boldsymbol{x}^{\prime} = (x_{1}^{\prime},\dots,x_{n}^{\prime})^{T}$. An SLC is specified by a system of linear inequalities
\begin{equation} \label{polyhedron}
        A \left( \begin{array}{c} \boldsymbol{x} \\ \boldsymbol{x}^{\prime} \end{array} \right) \leq b,
    \end{equation}
where $A \in \ZZ^{k \times 2n}$ and $b \in \ZZ^{k}$.

\begin{remark}
This definition differs from \eqref{SLC} in the Introduction because it omits an explicit loop guard. The two forms are expressively equivalent, since the guard can be encoded in the transition relation.
\end{remark}

The \emph{transition polyhedron} is the set $\mathcal{R} \subseteq \RR^{2n}$ of vectors satisfying \eqref{polyhedron}.

\begin{remark}
Since we study integer-valued loops, the actual transitions between states are elements of $\mathcal{R}_{\ZZ}$. We nevertheless work over the reals when convenient for proofs.
\end{remark}

We identify an SLC with its transition polyhedron $\mathcal{R}\subseteq \RR^{2n}$, and denote it by $\mathcal{R}$. We denote the size of the binary encoding of $A$ and $b$ as $|\mathcal{R}|$.
 
A \emph{trace} is a sequence of states $s_1, s_{2}, \dots$ such that $(s_{i}, s_{i+1}) \in \mathcal{R}_{\ZZ}$ for all $i\ge 1$.

The \emph{SLC termination problem} is to decide, for a given SLC, whether it has an infinite trace.

\subsection{Cycles and Self-avoiding Traces}
We divide the termination problem into two problems for two different types of infinite traces.

For an SLC, a \emph{cycle of length $k$} is a finite trace $s_{1},\dots,s_{k},s_{1}$. The existence of a cycle implies the existence of an infinite trace. The converse need not hold:

\begin{example}
The SLC $x^{\prime} = x + 1$ has no cycle but admits an infinite trace.
\end{example}

An infinite trace $s_{1}, s_{2}, \dots$ is \emph{self-avoiding} if $s_{i} \ne s_{j}$ for all $i \ne j$.

\begin{observation}
An SLC has an infinite trace if and only if it has a cycle or a self-avoiding trace.
\end{observation}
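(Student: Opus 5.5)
The plan is to prove both directions directly from the definitions, using only that the state space $\ZZ^{n}$ is the same at every step.

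For the ``if'' direction, both kinds of witness give an infinite trace. If $s_{1},\dots,s_{k},s_{1}$ is a cycle, then $(s_{i},s_{i+1})\in\mathcal{R}_{\ZZ}$ for $i<k$ and $(s_{k},s_{1})\in\mathcal{R}_{\ZZ}$. Repeating the block $s_{1},\dots,s_{k}$ forever therefore yields a sequence in which every consecutive pair is a transition, which is an infinite trace. A self-avoiding trace is an infinite trace by definition.

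For the ``only if'' direction, take an infinite trace $s_{1},s_{2},\dots$ and split into two cases.

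\begin{itemize}
\item If all the $s_{i}$ are pairwise distinct, the trace is self-avoiding and we are done.
\item Otherwise there are indices $i<j$ with $s_{i}=s_{j}$. Choose such a pair with $j-i$ minimal, or simply any such pair. Then $s_{i},s_{i+1},\dots,s_{j-1},s_{j}=s_{i}$ is a finite trace that starts and ends at the same state, so it is a cycle of length $j-i\ge 1$.
\end{itemize}

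The two cases are exhaustive, which proves the equivalence.

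There is no real obstacle here. The only point needing care is bookkeeping: the cycle has to be written in the form $s_{1},\dots,s_{k},s_{1}$ required by the definition. Reindexing $t_{\ell}:=s_{i+\ell-1}$ for $\ell=1,\dots,k$ with $k=j-i$ does this, and every consecutive pair, including the closing pair $(t_{k},t_{1})=(s_{j-1},s_{j})$, lies in $\mathcal{R}_{\ZZ}$.
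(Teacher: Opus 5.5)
Your proof is correct and matches the reasoning the paper leaves implicit: the paper gives no written proof and only remarks that a cycle yields an infinite trace. Your case split on whether some state repeats, including the reindexing into the form $s_1,\dots,s_k,s_1$ with the closing pair $(s_{j-1},s_j)$, is exactly the intended argument.
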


In the remainder of the paper, we analyze the problems of deciding the existence of a cycle and the existence of a self-avoiding trace separately for a given SLC.

\section{Cycles}\label{cycles}

\subsection{Bound on a cycle length}

Although cycles are finite objects (unlike self-avoiding traces), deciding their existence for an SLC remains nontrivial, since a cycle can have arbitrary length. By contrast, deciding the existence of a cycle whose length is bounded is decidable.

\begin{proposition}\label{bounded_ILP}
Given an SLC $\mathcal{R}$ and $M \in \NN$, deciding whether $\mathcal{R}$ has a cycle of length at most $M$ is in $\mathsf{NP}$ with respect to $|\mathcal{R}| \cdot M$.
\end{proposition}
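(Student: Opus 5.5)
We reduce the question to at most $M$ instances of Integer Linear Programming (ILP) feasibility, one for each possible cycle length, and then use the classical fact that ILP feasibility is in $\mathsf{NP}$.

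\textbf{The ILP for a fixed length.} Fix $k \in \{1,\dots,M\}$. Introduce integer variable vectors $\boldsymbol{y}_1,\dots,\boldsymbol{y}_k \in \ZZ^n$ and set $\boldsymbol{y}_{k+1} := \boldsymbol{y}_1$. Consider the system
\[
A \left( \begin{array}{c} \boldsymbol{y}_i \\ \boldsymbol{y}_{i+1} \end{array} \right) \leq b, \qquad i = 1,\dots,k,
\]
which has $kn$ integer unknowns and $k$ copies of the rows of $(A,b)$. A solution is exactly a finite trace $s_1,\dots,s_k,s_1$ with $s_i = \boldsymbol{y}_i$, that is, a cycle of length $k$. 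The definition of a cycle does not require the states to be distinct, so no disequality constraints are needed. Consequently, $\mathcal{R}$ has a cycle of length at most $M$ if and only if one of these $M$ systems is feasible over $\ZZ$. Alternatively, since a cycle of length $k$ traversed repeatedly gives one of any multiple length, $\mathcal{R}$ has a cycle of length at most $M$ if and only if it has one of length exactly $M!$. That route, however, blows up the size, so we keep the disjunction over $k$.

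\textbf{Size of the instance.} The binary encoding of the $k$-th system has size $O(k\cdot|\mathcal{R}|) \le O(M\cdot|\mathcal{R}|)$. Its coefficients are those of $A$ and $b$, repeated and padded with zero blocks.

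\textbf{Certificate and verification.} We use the standard result that integer linear programming is in $\mathsf{NP}$ (Borosh--Treybig, Papadimitriou). If a system $Ax\le b$ with integer data has an integer solution, then it has one whose bit size is polynomial in the encoding size of the system. The certificate is therefore a length $k\le M$ together with such a small solution $(s_1,\dots,s_k)$. Its size is polynomial in $M\cdot|\mathcal{R}|$. The verifier checks each of the $k$ inequality blocks in time polynomial in $M\cdot|\mathcal{R}|$, because each block involves only $2n$ of the variables. The blocks could also be verified independently against $\mathcal{R}$ directly.

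\textbf{Main obstacle.} The only nontrivial ingredient is the small-solution bound for ILP, which ensures that a cycle, if one exists, has states of polynomially bounded bit size. We invoke it as a known theorem, noting that it applies because all entries are integers and the system size is $O(M|\mathcal{R}|)$. Everything else is bookkeeping. The statement is phrased in $|\mathcal{R}|\cdot M$, which means $M$ counts in unary. This is consistent with writing out $k$ copies of the constraints.
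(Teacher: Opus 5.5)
Your proof is correct and follows the paper's argument: one ILP feasibility instance per cycle length $k\le M$, each of size $O(k|\mathcal{R}|)$, combined with ILP $\in\mathsf{NP}$. You additionally spell out the small-solution certificate, which the paper leaves implicit.

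One caveat concerns your unused aside. The equivalence ``a cycle of length at most $M$ iff a cycle of length exactly $M!$'' fails in the backward direction. For example, the SLC of Proposition~\ref{lower_cycle} with $n=3$ has an $8$-cycle, hence a cycle of length $7!=8\cdot 630$, but it has no cycle of length at most $7$.
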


\begin{proof}
It is known that the feasibility problem for Integer Linear Programming (ILP) -- deciding whether a rational polyhedron contains an integer point -- is $\mathsf{NP}$-complete \cite{10.1145/322276.322287}. We give a polynomial-time reduction from our problem to ILP.

Given $\mathcal{R}$ and $M$, for each $m \in \{1,\dots, M\}$ we construct an ILP instance encoding the existence of a cycle of length \emph{exactly} $m$:
$$P_m = \bigl\{ (\boldsymbol{x}_1,\dots,\boldsymbol{x}_m )\mid (\boldsymbol{x}_m,\boldsymbol{x}_1) \in \mathcal{R} \wedge \forall i<m: (\boldsymbol{x}_{i}, \boldsymbol{x}_{i + 1}) \in \mathcal{R} \bigr\}.$$
Then $\mathcal{R}$ has a cycle of length at most $M$ iff one of the polyhedra $P_m$ contains an integer point. Each ILP instance has size $ \mathcal{O}(|\mathcal{R}| \cdot M)$.
\end{proof}

Proposition~26 of \cite{guilmant_et_al:LIPIcs.ICALP.2024.140} shows that for Linear-Constraint Loops over $\QQ$ or $\RR$, the existence of a bounded infinite sequence is equivalent to the existence of a cycle of length one. The next proposition shows that this equivalence fails over $\ZZ$.

\begin{proposition}\label{lower_cycle}
For every $n$, there exists an SLC over $n$ variables that has a cycle of length $2^{n}$ and no cycles of smaller length.
\end{proposition}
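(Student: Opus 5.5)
We build the loops inductively as binary counters. Both claims then reduce to an exact description of the integer points of the transition polyhedron. The model is $x' = 1 - x$: its only cycles are $0,1,0$ and its reverse, since a fixed point would need $2x = 1$. So the one-variable case gives a cycle of length $2 = 2^1$ and none of length one. For two variables the affine map $(x,y) \mapsto (a - y,\, x + b)$ with $a - b$ odd works deterministically. It is a rotation by $90^\circ$ plus a translation, so $f^4 = \mathrm{id}$. A fixed point of $f$ would satisfy $x = (a-b)/2$, and $f^2(x,y) = (a - b - x,\, a + b - y)$ has a fixed point only if $2x = a - b$. Neither is integral, so every orbit has length exactly $4$. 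For $n \ge 3$ this rotation trick fails, since an integer matrix of order $2^n$ needs dimension at least $2^{n-1}$. So I will use the convex nondeterminism and the integrality of states instead.

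\emph{Construction.} Take variables $x_1,\dots,x_n$ with box constraints $0 \le x_i, x_i' \le 1$, and write $v = \sum_i 2^{i-1} x_i$ and $v' = \sum_i 2^{i-1} x_i'$. The intended relation is $v' \equiv v+1 \pmod{2^n}$, i.e.\ the binary successor. I will write it as a conjunction of linear constraints in $(\boldsymbol{x}, \boldsymbol{x}')$.

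\emph{Wrap-around.} The upper bound $v' \le 2^n(2^n - 1 - v)$ forces $v' = 0$ at the top state, where all bits are $1$. It is inactive at every other state, because there the right-hand side is at least $2^n > v'$.

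\emph{Non-top states.} I would add $v' \le v + 1$ together with a lower bound $v' \ge v + 1 - 2^n h(\boldsymbol{x})$. Here $h$ must be a linear form in the bits with $h(\boldsymbol{x}) \le 0$ at every state on the intended cycle and $h = 1$ only at the top state.

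\emph{Proof of both claims.} Once the transition set is exactly the successor relation, the verification is straightforward. The intended trace $0 \to 1 \to \dots \to 2^n - 1 \to 0$ is a cycle of length $2^n$. Along any trace $v$ increases by exactly one except at the top state, where it resets to $0$. Hence every cycle must pass through the top state and contain all $2^n$ values, so no shorter cycle exists.

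\emph{Main obstacle.} The hard step is making the lower bound exact. A single linear form $h$ that vanishes on all non-top $0/1$ vectors and equals $1$ at the all-ones vector does not exist. The natural candidate $h = \sum_i x_i - (n-1)$ is strictly negative at most states. It then forces $v' > v + 1$, which contradicts $v' \le v+1$ and kills those states. I would first try to avoid needing a global indicator by building the counter recursively, as a doubling construction. From an SLC $\mathcal{R}_{n-1}$ whose minimal cycle has length $2^{n-1}$ and passes through a vertex $p$, I add one bit $y$. The constraint $|y' - y| \le g(\boldsymbol{x})$ lets $y$ change only when $g > 0$. The constraint $|y' - (1-y)| \le g'(\boldsymbol{x})$ forces $y$ to flip at $p$. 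Here $g$ and $g'$ are linear forms, and since $p$ is an extreme point of the relevant polytope it can be isolated by a supporting hyperplane. The bounds then interact linearly only at $p$, and the box constraints keep everything integral. I would then check, using that $y$ flips exactly once per traversal of the $\mathcal{R}_{n-1}$-cycle, that the minimal cycle length doubles. If $\mathcal{R}_{n-1}$ has several cycles of length $2^{n-1}$, I would also have to ensure that $p$ lies on each of them.
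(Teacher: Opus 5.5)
Your cases $n=1$ and $n=2$ are correct, and so is your closing argument: once the integer transitions are exactly the successor relation, every cycle passes through the top state and has length $2^n$. The gap is that for $n\ge 3$ you never actually produce the SLC.

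\begin{itemize}
\item You abandon the direct counter, rightly, since the required $h$ does not exist.
\item The doubling step is only a plan, and its stated conditions are not the right ones. With integer $y,y'$, the constraint $|y'-y|\le g(\boldsymbol{x})$ deletes every transition out of a state where $g<0$. So it is not enough that $g$ separates $p$ from the other states via a supporting hyperplane. You need $g(p)\ge 1$ and $0\le g<1$ at every other state, and dually $0\le g'(p)<1$ and $g'\ge 1$ elsewhere. You exhibit neither form.
\item You leave open the invariant that rules out short cycles in the lifted loop. A cycle of $\mathcal{R}_{n-1}$ avoiding $p$ lifts to a cycle of the same length.
\end{itemize}

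The missing observation, which is what the paper uses, is that you never need to write the inequalities down. Let $T\subseteq\{0,1\}^{2n}$ be the transition set of any Hamiltonian cycle on $\{0,1\}^n$; your binary successor will do. Set $\mathcal{R}:=\conv(T)$, a rational polytope and hence an SLC.

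Suppose an integer point $u$ equals $\sum_i\alpha_i t^i$ with $t^i\in T$, all $\alpha_i>0$ and $\sum_i\alpha_i=1$. Then each coordinate $u_j$ is an integer convex combination of values $t^i_j\in\{0,1\}$. This forces $t^i_j=u_j$ for every $i$, so every $t^i$ equals $u$ and $\mathcal{R}_\ZZ=T$. Your last paragraph then finishes the proof.

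Your doubling route can also be repaired by exploiting this integrality slack. Keep the invariant that the integer transitions of $\mathcal{R}_{n-1}$ form a single Hamiltonian cycle on $\{0,1\}^{n-1}$. Take $p=(1,\dots,1)$, $g(\boldsymbol{x})=\frac{1}{n-1}\sum_i x_i$ and $g'(\boldsymbol{x})=(n-1)-\sum_i x_i$. Then $y'=y$ away from $p$ and $y'=1-y$ at $p$. The lifted integer transitions therefore form a single Hamiltonian cycle of length $2^n$.
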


\begin{proof}
Fix $n$ and consider the set of states $S := \{0,1\}^{n}$. Take any cycle $C$ on $S$, so that each $s \in S$ appears exactly once; $|C| = 2^{n}$. Let $T$ be the set of transitions of $C$, so $T \subseteq \{0,1\}^{2n}$. Let $\mathcal{R}:= \conv(T)$, which as a convex hull of a finite set of integer points, can be defined as an SLC. We claim that $\mathcal{R}_{\ZZ} = T$, i.e., $\conv(T)$ has no integer points other than its vertices in $T$. This implies that the minimal cycle length of $\conv(T)$ is $2^n$.

Suppose, to the contrary, that there exists $p \in \mathcal{R}_\ZZ \setminus T$. Enumerate $T = \{t^{1},\dots,t^{2^{n}}\}$ and write $p = \sum_{i=1}^{2^{n}} \alpha_{i} t^{i}$ with $\alpha_{i} \in [0,1]$ and $\sum_{i} \alpha_{i} = 1$. Without loss of generality, assume $\alpha_{1} > 0$ and choose a coordinate $j$ such that $t^1_j\ne p_j$ (such a $j$ exists since $ t^{1}\ne p$). Because $p_{j} = \sum_{i=1}^{2^n} \alpha_{i} t^{i}_j$ is an integer and each $t^{i}_{j} \in \{ 0,1 \}$, the only way the convex combination can be an integer is if all $t^{i}_{j}$ with $\alpha_{i} > 0$ are equal. This contradicts $t^{1}_{j} \ne p_{j}$. Hence, no such $p$ exists.
\end{proof}

Since there is no universal bound on the minimal cycle length, it is natural to ask whether a bound depending on the number of variables exists.

\begin{question}\label{loops_cycle}
For every $n \in \NN$, does there exist a constant $M(n)$ such that every loop over $n$ variables that has a cycle also has a cycle of length at most $M(n)$?
\end{question}


If the answer to Question~\ref{loops_cycle} is positive, then cycle detection reduces to checking for cycles of length at most $M(n)$. If $M(n)$ is computable, this yields decidability of cycle detection.

There exist convex sets that are not polyhedra (e.g., convex sets defined by quadratic inequalities). We now show that a positive answer to Question~\ref{loops_cycle} extends from polyhedral relations to arbitrary convex relations.

\begin{proposition}
Suppose that every SLC $\mathcal{R} \subseteq \RR^{2n}$ with a cycle also has a cycle of length at most $M(n)$. Then any convex relation $\mathcal{T} \subseteq \RR^{2n}$ with a cycle also has a cycle of length at most $M(n)$.
\end{proposition}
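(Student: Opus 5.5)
Let $\mathcal{T} \subseteq \RR^{2n}$ be convex and let $s_1,\dots,s_k,s_1$ be a cycle of $\mathcal{T}$, so $s_i \in \ZZ^n$ and the transitions $t^i := (s_i,s_{i+1})$ (indices mod $k$) lie in $\mathcal{T}_{\ZZ}$. The plan is to replace $\mathcal{T}$ by a polyhedral relation that still contains this cycle and whose integer points all lie in $\mathcal{T}$. Then any short cycle found in that polyhedron is already a cycle of $\mathcal{T}$.

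Concretely, define $\mathcal{R} := \conv(t^1,\dots,t^k)$. This is the convex hull of finitely many integer points. By the Minkowski–Weyl Theorem (Theorem~\ref{Minkowski-Weyl}), taking $l=k$ and $m=0$, it is a polyhedron given by rational inequalities, and clearing denominators puts it in the form \eqref{polyhedron} with $A \in \ZZ^{\cdot\times 2n}$, $b\in\ZZ$. So $\mathcal{R}$ is an SLC, exactly as in the proof of Proposition~\ref{lower_cycle}. The set $\mathcal{T}$ is convex and contains every $t^i$, so $\mathcal{R} \subseteq \mathcal{T}$, and hence $\mathcal{R}_{\ZZ} \subseteq \mathcal{T}_{\ZZ}$.

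The original sequence $s_1,\dots,s_k,s_1$ is a cycle of $\mathcal{R}$, since each of its transitions is some $t^i \in \mathcal{R}_{\ZZ}$. The hypothesis therefore gives a cycle $u_1,\dots,u_j,u_1$ of $\mathcal{R}$ with $j \le M(n)$. Each of its transitions lies in $\mathcal{R}_{\ZZ} \subseteq \mathcal{T}_{\ZZ}$, so it is also a cycle of $\mathcal{T}$ of length at most $M(n)$.

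No step here is a serious obstacle. The only point needing care is that $\conv$ of finitely many integer points is an SLC in the paper's sense, with integer $A$ and $b$. This is the rational part of Theorem~\ref{Minkowski-Weyl} (the easy direction, from finitely generated to inequality description) followed by scaling each inequality by a common denominator. Note also that closedness of $\mathcal{T}$ is never needed: only convexity is used, to get $\mathcal{R}\subseteq\mathcal{T}$.
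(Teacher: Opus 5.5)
Your proof is correct and is the paper's argument: the convex hull of the cycle's transitions is an SLC contained in $\mathcal{T}$ by convexity, so the short cycle the hypothesis gives for it is also a cycle of $\mathcal{T}$. Your added justification that this hull has an integer inequality description is a detail the paper leaves implicit here, as it also does in Proposition~\ref{lower_cycle}.
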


\begin{proof}
If $\mathcal{T}$ has a cycle represented by the set of transitions $\{t_{1},\dots,t_{m}\} \subseteq \mathcal{T}$, then $\conv(t_{1},\dots,t_{m}) \subseteq \mathcal{T}$ by convexity. The polyhedron $\conv(t_{1},\dots,t_{m})$ has a cycle and therefore, by hypothesis, has one of length at most $M(n)$. This cycle is contained in $\mathcal{T}$.
\end{proof}

\subsection{One variable case}

The following Theorem shows that the answer to \cref{loops_cycle} is positive for $n=1$ with $M(1)=2$.

\begin{theorem}\label{onevarcycle}
    If an SLC $\mathcal{R}\subseteq \RR^2$ has a cycle, then it has a cycle of length at most $2$.
\end{theorem}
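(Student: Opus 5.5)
The plan is to argue by induction on the length of a shortest cycle. Fix a cycle $s_1,\dots,s_k,s_1$ of minimal length $k$ and suppose $k\ge 3$; we derive a contradiction by producing a strictly shorter cycle inside $Q:=\conv\{(s_i,s_{i+1})\mid 1\le i\le k\}\subseteq\mathcal{R}$, indices taken mod $k$. By minimality all $s_i$ are distinct. Since $\mathcal{R}$ is convex, every integer point of $Q$ is a transition of $\mathcal{R}$. So it suffices to find either an integer point $(x,x)\in Q$, which is a cycle of length one, or integers $x\ne y$ with $(x,y),(y,x)\in Q$, which is a cycle of length two.

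\textbf{Step 1 (symmetric target).} Let $\sigma(x,y)=(y,x)$. A cycle of length at most two is exactly an integer point of the convex, $\sigma$-symmetric set $Q\cap\sigma(Q)$. This set is nonempty, since the barycenter $\frac1k\sum_i(s_i,s_{i+1})=(\mu,\mu)$ with $\mu=\frac1k\sum_i s_i$ lies in it. The difficulty is integrality, so I would not try to locate an integer point in $Q\cap\sigma(Q)$ directly. Instead I would show combinatorially that one of the explicit lattice points below lies in it.

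\textbf{Step 2 (local shortening).} Let $m=\min_i s_i$, and write the cycle around it as $p\to m\to s$ with $p,s>m$. Also let $M=\max_i s_i$, with neighbours $q\to M\to t$ and $q,t<M$. Every $s_i$ lies in $[m,M]$. The cycle contains an ascending arc from $m$ to $M$ and a descending arc from $M$ back to $m$. Both arcs must pass over every intermediate integer value, in the sense that some step of the arc jumps across it. Take any integer $x$ with $m<x<M$. There is an ascending step $(a,b)$ with $a\le x\le b$ and a descending step $(c,d)$ with $d\le x\le c$. Convexity of the slice $Q_x$, an interval, then gives information about which $y$ satisfy $(x,y)\in Q$. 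The precise statement I would aim for is the following: for suitable $x$, either $x\in Q_x$ (a fixed point), or the points $(a,b)$ and $(c,d)$ together with the neighbouring transitions around $m$ and $M$ span a quadrilateral containing a lattice point $(u,v)$ whose mirror $(v,u)$ is also spanned. The natural candidates come from the pairs $(m,s),(p,m)$ and $(M,t),(q,M)$. If $s\le p$, the segment from $(m,s)$ to $(p,m)$ together with $(m,s)$ suggests using $(m,s)$ and $(s,m)$. I would then check that $(s,m)$ lies in $Q$ using the descending arc from $M$ to $m$, which must contain a step from a value $\ge s$ to a value $\le m$ region, i.e.\ ending at $m$ from $p\ge s$. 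The case $s>p$ is symmetric.

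\textbf{Step 3 (main obstacle).} The hard part is exactly this integrality step: showing that the mirror of a chosen lattice transition lies in $Q$, rather than only some real point near it. My first attempt is the extremal choice above, namely the transitions into and out of the minimum. These have the special feature that one coordinate equals $m$, which is the minimum of all coordinates, so $Q\subseteq[m,M]^2$. Consequently any convex combination that is pinned to the boundary line $x=m$ or $y=m$ automatically has an integer coordinate. I would try to show $(s,m)\in\conv\{(p,m),(m,s),\dots\}$ using that $Q$ meets the line $y=m$ in a segment from some $x_0\le m$... (actually only at $x=p$ unless further points end at $m$). If that fails, I would fall back on a crossing/uncrossing argument. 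The idea there is that two transitions $(a,b)$ and $(c,d)$ with $a<c$ and $b>d$ lie on a segment whose integer points, if any, let us splice the cycle into two shorter closed walks, one of which is a cycle. To force an integer point I would choose a crossing pair minimizing $|c-a|+|b-d|$ and use Pick-type reasoning in the lattice triangle they form with a third transition, to show that minimality leaves no room for a non-splittable configuration. I expect this lattice-point step to require the most care.
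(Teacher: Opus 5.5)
\textbf{There is a genuine gap in the proposal.} The step you single out as the crux is never established: exhibiting a lattice point $(x,x)\in Q$, or a mirrored lattice pair $(x,y),(y,x)\in Q$. The ``precise statement'' in Step~2 only restates this goal, and both concrete routes you offer for it fail.

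\emph{The extremal candidate cannot work.} Every point of $Q$ has both coordinates at least $m$. Since the states are distinct, $(p,m)$ is the only generating transition with second coordinate $m$, so $Q\cap\{y=m\}=\{(p,m)\}$, as you half-noticed. Hence $(s,m)\in Q$ would force $s=p$, which is false for $k\ge 3$. The same supporting-line argument at $x=m$, $y=M$ and $x=M$ shows that no cycle of length at most two inside $Q$ can have $m$ or $M$ as a state. So the short cycle has to be sought away from the extremes.

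\emph{The fallback in Step~3 does not work either.} Take a lattice point $(u,v)$ on the segment joining two crossing transitions $(a,b),(c,d)$. It is a transition between values that are in general not states of the cycle, so it does not splice the cycle into shorter closed walks. Splicing would require $(a,d)$ or $(c,b)$, and these lie off that segment. Moreover, a segment between two lattice points need not contain any other lattice point, and no minimality choice over crossing pairs changes that.

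\textbf{The missing idea.} Look for the short cycle next to the barycenter, which you already found in Step~1. Use an extremal state only to stretch the diagonal slice of $\mathcal{R}$ past an integer. The paper does this as follows.
\begin{enumerate}
\item Translate by $\lfloor\mu\rfloor$ so that $0\le\mu<1$. Then any diagonal point $(\beta,\beta)\in\mathcal{R}$ with $\beta\le 0$ or $\beta\ge 1$ yields $(0,0)$ or $(1,1)$ by convexity along the diagonal.
\item Produce such a point from the state $\alpha$ of largest absolute value and its neighbours $a\to\alpha\to b$. The segment from $(a,\alpha)$ to $(\alpha,b)$ meets the diagonal at $\beta=\frac{ab-\alpha^2}{a+b-2\alpha}$.
\item If $\alpha\le 0$, then $\beta<0$ and you are done.
\item If $\alpha>0$, then $\beta<1$ is equivalent to $(\alpha-1)^2<(a-1)(b-1)$. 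When $-\alpha$ is not a state (otherwise switch to $-\alpha$), this forces $\{a,b\}=\{1-\alpha,2-\alpha\}$.
\item In this rigid configuration, repeat the diagonal-crossing computation at the minimum $1-\alpha$ and its predecessor $c$. It gives $(0,0)$ unless $c=\alpha-1$.
\item In that last case, $(1,0)$ and $(0,1)$ both lie on the segment of the line $x+y=1$ between $(1-\alpha,\alpha)$ and the midpoint of $(\alpha-1,1-\alpha)$ and $(\alpha,2-\alpha)$. This gives a cycle of length two.
\end{enumerate}
Thus integrality is handled by sign computations along the diagonal plus one explicit residual case, not by a lattice-point search in $Q\cap\sigma(Q)$. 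The short cycle is always found among the vertices of the unit lattice square containing $(\mu,\mu)$.
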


\begin{proof}[Sketch of proof]
        
    Let us informally explain the idea of the proof with an illustration from \Cref{fig:cycle}. Consider the set of transitions of a cycle (the black points). The center of this set lies on the diagonal (the green point). If it has integer coordinates, then we obtain a length-one cycle. Otherwise, it lies on a unit square in the integer lattice. Two vertices of this unit square lie on the diagonal representing length-one cycles (the blue points). The other two vertices lie on the anti-diagonal, representing a length-two cycle (the red points). We then consider several cases, branching on the signs of different values, such as the largest value among the states of the cycle. In some cases, we show that $\mathcal{R}$ contains points on the diagonal below or above the blue points, yielding the existence of the length-one cycles. In the remaining cases, we show $\mathcal{R}$ contains points on the anti-diagonal on both sides of the green point, implying the existence of a length-two cycle. The formal argument is presented in \Cref{proofs}.

\end{proof}

\begin{figure}[t]
\centering

\begin{subfigure}[t]{0.48\textwidth}
  \centering
  \begin{tikzpicture}[scale=0.8]
    \draw[step=1, very thin] (-3,-3) grid (3,3);

    \draw[->] (-3.4,0) -- (3.4,0) node[right] {$x$};
    \draw[->] (0,-3.4) -- (0,3.4) node[above] {$x'$};

    \foreach \t in {-3,-2,-1,1,2,3} {
      \draw (\t,0.08) -- (\t,-0.08) node[below] {\small $\t$};
      \draw (0.08,\t) -- (-0.08,\t) node[left] {\small $\t$};
    }
    \node[below left] at (0,0) {\small $0$};

    \coordinate (A) at (1,2);
    \coordinate (B) at (2,0);
    \coordinate (C) at (0,-2);
    \coordinate (D) at (-2,1);
    \draw[thick] (A) -- (B) -- (C) -- (D) -- cycle;

    \fill[green] (0.25,0.25) circle (2.5pt);
    \fill[blue]  (0,0) circle (2.5pt);
    \fill[blue]  (1,1) circle (2.5pt);
    \fill[purple] (1,0) circle (2.5pt);
    \fill[purple] (0,1) circle (2.5pt);
  \end{tikzpicture}
  \caption{Cycle intuition.}
  \label{fig:cycle}
\end{subfigure}
\hfill
\begin{subfigure}[t]{0.48\textwidth}
  \centering
  \begin{tikzpicture}[scale=0.85]
    \def\xmin{0}
    \def\xstart{3}
    \def\xmax{7}
    \def\ymin{0}
    \def\ymax{10}

    \draw[step=1, very thin] (\xmin,\ymin) grid (\xmax,\ymax);

    \draw[->] (\xmin-0.3,0) -- (\xmax+0.5,0) node[right] {$x$};
    \draw[->] (0,\ymin-0.3) -- (0,\ymax+0.5) node[above] {$x'$};

    \foreach \t in {\xmin,...,\xmax} {
      \draw (\t,0.10) -- (\t,-0.10) node[below] {\small $\t$};
    }

    \foreach \t in {1,...,\ymax} {
      \draw (0.10,\t) -- (-0.10,\t) node[left] {\small $\t$};
    }

    \draw[thick,->] (\xstart,{(4*\xstart-2)/3}) -- (\xmax,{(4*\xmax-2)/3})
      node[above right] {};
    \draw[thick,->] (\xstart,{(4*\xstart-1)/3}) -- (\xmax,{(4*\xmax-1)/3})
      node[above right] {};

    \fill[red, opacity=0.25]
      (\xstart,{(4*\xstart-2)/3}) --
      (\xmax,{(4*\xmax-2)/3}) --
      (\xmax,{(4*\xmax-1)/3}) --
      (\xstart,{(4*\xstart-1)/3}) -- cycle;
  \end{tikzpicture}
  \caption{SLC $\mathcal{R}^+$.}
  \label{fig:Rplus}
\end{subfigure}

\caption{}
\label{fig:twofigs}



\end{figure}

\section{Generalized Collatz sequences} \label{collatz}
The classical Collatz problem remains open. A key difficulty is its residue-class–dependent dynamics modulo~$2$.
Several generalizations have been studied, replacing the modulus~$2$ by other moduli and considering related piecewise-linear maps~\cite{lagarias20213x1problemoverview}.
In particular, the framework proposed by Möller~\cite{Mller1978berHV} has been investigated both theoretically and empirically~\cite{matthews2010generalized}.
In this paper, we focus on the Uniform Distribution Conjecture of Matthews and Watts~\cite{Matthews1984}, formulated for Möller's generalization.

Let $d$ be a positive integer with $d\geq 2$. Let $m_0,\dots, m_{d-1}$ be nonzero integers such that $\gcd(m_i,d) = 1$  for every $i$. Let $R = \{ 
r_0,\dots,r_{d-1} \}$ be a set of integers with the property that for each $  i,$ $m_i i \equiv r_i (mod \ d)$. Then the generalized Collatz mapping $T: \ZZ \to \ZZ$ is given by
$$T(x) := 
\frac{m_i x-r_i}{d}, \ \ \ \ \text{if }  x\equiv i\text{ (mod } d)$$
For $n\in \ZZ$ we define the sequence $\{T^k(n)\}_{k\in \NN}$ recursively by
$$T^k(n) := \begin{cases} n, \ \ \ \ \ \ \ \ \ \ \ \ \  \ \ \ \ \text{if } k=0, \\
T(T^{k-1}(n)), \ \ \ \ \text{if } k>0
\end{cases}$$

A sequence $\{s_k\}_{k\in \NN}$ is called \emph{unbounded} if for every $n\in \NN$ there exists $k\in \NN$ such that $|s_k| > n$.

We recall a longstanding number-theoretic conjecture, first stated by Matthews and Watts \cite{Matthews1984} and further discussed, with empirical support, in \cite{matthews2010generalized}.
\begin{conjecture} [Uniform Distribution Conjecture] \label{strongconj}
   If the sequence $\{T^k(n)\}_{k\in \NN}$ is unbounded, it is uniformly distributed modulo $d^\alpha$ for each $\alpha\geq 1$, i.e. 
   $$\underset{N\to\infty}{\mathrm{lim}}\; \frac{1}{N} card\{ k\leq N \mid T^k(n) \equiv i \ (mod \ d^\alpha )  \} = \frac{1}{d^\alpha}$$
   for $i = 0,\dots, d^\alpha-1$.
\end{conjecture}

In this paper, we state a weaker version of this conjecture. First, let us define a more restricted class of generalized Collatz mappings.

Let $d$ be a positive integer $d\geq 2$, let $m$ be a nonzero integer with $\gcd (d,m) = 1$, and let $a$ be an integer. Then a \textit{weak Collatz mapping} $T:\ZZ \-> \ZZ$ is given by
$$T(x) := 
\frac{m x- (a+i)}{d}, \ \ \ \ \text{where } 0\le i < d, \text{ with }  mx\equiv a+i \ (\text{mod } d).$$ 

\begin{conjecture}[Reachability Conjecture]\label{simpleconj} Let $T$ be a weak Collatz mapping. Suppose for some $n\in \ZZ$ the sequence $\{T^k(n)\}_{k\in \NN}$  is unbounded. Then there exists $k \ge 0$ such that $mT^k(n) \equiv a \ (mod  \ d)$.
\end{conjecture}

The proofs of the next two propositions are presented in \Cref{proofs}.

The following proposition implies that Conjecture \ref{simpleconj} holds for mappings with $d=2$.

\begin{proposition}
    \label{collatz_lemma}
    Let $T$ be a generalized Collatz mapping with $d>1$. Suppose for some $n\in \ZZ$ the sequence $\{T^k(n)\}_{k\in \NN}$  is unbounded. Then it visits at least two residue classes modulo $d$.
\end{proposition}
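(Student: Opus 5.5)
The plan is to argue by contradiction. Suppose the sequence $s_k := T^k(n)$ is unbounded but every term lies in a single residue class $i$ modulo $d$. Then $T$ acts along the whole orbit as the single affine map $s_{k+1} = (m_i s_k - r_i)/d$. The key is to track the orbit modulo powers of $d$. Iterating $j$ steps gives the closed form
\[
s_{k+j} = \frac{m_i^{j} s_k - r_i\,(m_i^{j-1} + m_i^{j-2} d + \dots + d^{j-1})}{d^{j}} .
\]
Every intermediate value must be an integer congruent to $i$ modulo $d$. Requiring $s_{k+j}\in\ZZ$ for all $j$ forces $m_i^j s_k \equiv r_i \sum_{t<j} m_i^{j-1-t} d^t \pmod{d^j}$ for every $j$. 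Since $\gcd(m_i,d)=1$, this determines $s_k$ modulo $d^j$ uniquely for every $j$. Hence $s_0$ is pinned down modulo every power of $d$; that is, $s_0$ is the unique $d$-adic solution of this system.

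Next, that $d$-adic solution can be identified explicitly. Solve the fixed-point equation of the affine branch, $x = (m_i x - r_i)/d$, which gives $x^* = r_i/(m_i - d)$ when $m_i\neq d$. Note $m_i\ne d$ because $\gcd(m_i,d)=1$ and $d\ge 2$. This rational fixed point is a $d$-adic integer, since its denominator $m_i-d\equiv m_i$ is coprime to $d$. It satisfies all the congruences above, because $x^* - s_{k+1}$ relates to $x^* - s_k$ by multiplication by $m_i/d$. So $s_0 \equiv x^* \pmod{d^j}$ for all $j$, and then $s_0 = x^*$ as elements of $\ZZ_d$, and hence as rationals. Cleaner still, and avoiding $d$-adics, is to write $e_k := s_k - x^*$. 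Then $e_{k+j} = (m_i/d)^j e_k$, and each $e_k$ is a rational with denominator dividing $|m_i-d|$, so $(m_i-d)e_k \in\ZZ$. Integrality of $(m_i-d)e_{k+j} = m_i^j (m_i-d)e_k / d^j$ for all $j$, together with $\gcd(m_i,d)=1$, forces $d^j \mid (m_i-d)e_k$ for every $j$. Therefore $e_k=0$.

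The conclusion follows immediately: $s_k = x^*$ for all $k$, so the sequence is constant and hence bounded, contradicting unboundedness. The main obstacle is only the bookkeeping that justifies $(m_i-d)e_k\in\ZZ$ and the divisibility step. One writes $(m_i-d)s_k - r_i$ and checks that it is an integer, and that under the affine recursion it is multiplied exactly by $m_i/d$ at each step. Since $\gcd(m_i,d)=1$, repeated division by $d$ of a fixed integer is possible infinitely often only if that integer is $0$. Finally, specializing to $d=2$ with a weak Collatz mapping, there are only two residue classes. Visiting both means that the class with $mx\equiv a$ is hit, which gives Conjecture~\ref{simpleconj} for $d=2$.
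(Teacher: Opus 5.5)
Your proof is correct and is essentially the paper's argument: the paper writes out the closed form for $T^k(n)$, multiplies the resulting congruence by $m_i-d$ to obtain $m_i^k\bigl((m_i-d)n-r_i\bigr)\equiv 0 \pmod{d^k}$ for every $k$, and concludes that $n$ is a fixed point of the branch. This is exactly your observation that the integer $(m_i-d)s_k-r_i$ is multiplied by $m_i/d$ at each step, so it is divisible by every power of $d$ and must vanish; your shift by the fixed point $r_i/(m_i-d)$ is a tidier packaging of the same telescoping computation.
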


The next proposition links weak Collatz sequences to self-avoiding traces.

\begin{proposition}\label{reduction}
There exists a linear-time algorithm that takes as input a weak Collatz mapping
$T(x)=\frac{mx-(a+i)}{d},$ with $m>d$, and outputs two one-variable SLCs, $\mathcal{R}_+$ and $\mathcal{R}_-$, such that:
\begin{enumerate}
    \item $\mathcal{R}_+$ has an infinite trace if and only if there exists
    $n\in \mathbb{Z}$ such that, for every $k\ge 0$,
    \[
    mT^k(n)\not\equiv a \pmod d
    \quad \text{and} \quad
    T^k(n)>0.
    \]

    \item $\mathcal{R}_-$ has an infinite trace if and only if there exists
    $n\in \mathbb{Z}$ such that, for every $k\ge 0$,
    \[
    mT^k(n)\not\equiv a \pmod d
    \quad \text{and} \quad
    T^k(n)<0.
    \]
\end{enumerate}
\end{proposition}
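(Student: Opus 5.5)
The plan is to write down $\mathcal{R}_+$ and $\mathcal{R}_-$ explicitly as one-variable SLCs whose integer transitions are exactly the steps of $T$ that avoid the residue class $a$ and keep the sign. The key observation concerns a fixed integer $x$. Among the $d$ consecutive integers $mx-a, mx-a-1, \dots, mx-a-(d-1)$, exactly one is divisible by $d$. It is $mx-(a+i)$ with $mx \equiv a+i \pmod d$, and dividing it by $d$ gives $T(x)$. Excluding the value $i=0$ is therefore the same as requiring $mx \not\equiv a \pmod d$.

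Accordingly, I define $\mathcal{R}_+$ by the inequalities
\[
x \ge 1, \qquad m x - a - (d-1) \le d x' \le m x - a - 1 ,
\]
and $\mathcal{R}_-$ by the same two inequalities on $d x'$ together with $x \le -1$. This matches the picture in Figure~\ref{fig:Rplus}, which shows the case $m=4$, $d=3$, $a=0$. All coefficients are copied directly from $m,d,a$, up to the constant shifts $-1$ and $-(d-1)$, so the encoding has size linear in that of $T$. Hence the construction runs in linear time.

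Next I prove the transition lemma: for integers $x,x'$ we have $(x,x') \in (\mathcal{R}_+)_\ZZ$ iff $x \ge 1$, $mx \not\equiv a \pmod d$, and $x' = T(x)$, and analogously for $\mathcal{R}_-$ with $x \le -1$.
\begin{itemize}
\item For the forward direction, suppose $(x,x')$ is an integer point of $\mathcal{R}_+$. Then $dx'$ is a multiple of $d$ lying in $\{mx-a-i : 1 \le i \le d-1\}$. By the observation above, this forces $x' = T(x)$ with $i \ne 0$, that is, $mx \not\equiv a$.
\item For the converse, if $mx \not\equiv a$, then the unique $i$ with $mx \equiv a+i$ satisfies $1 \le i \le d-1$. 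So $x' = T(x)$ satisfies the inequalities.
\end{itemize}

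Finally I prove both equivalences. Given an infinite trace $s_1, s_2, \dots$ of $\mathcal{R}_+$, every state $s_k$ is the first coordinate of some transition. By the lemma, $s_k \ge 1$, $m s_k \not\equiv a \pmod d$, and $s_{k+1} = T(s_k)$. Taking $n = s_1$ gives $s_{k+1} = T^k(n)$ for all $k$, which yields the required $n$. Conversely, if $n$ satisfies the stated condition for all $k$, then each pair $(T^k(n), T^{k+1}(n))$ lies in $(\mathcal{R}_+)_\ZZ$ by the lemma, so the orbit of $n$ is an infinite trace. The case of $\mathcal{R}_-$ is identical with $x \le -1$.

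I do not expect a real obstacle; the only delicate point is the exactly-one-multiple-of-$d$ counting argument that makes the transition functional. The hypothesis $m>d$ does not seem to be needed for this construction. I expect it is there for the later use of $\mathcal{R}_\pm$, where growth of $|T^k(n)|$ matters.
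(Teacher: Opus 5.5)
Your proof is correct and establishes the statement exactly as written, but your construction differs from the paper's in one respect that matters. The paper keeps the same two inequalities on $dx'$ but uses the guard $0<x<x'$ (and $0>x>x'$ for $\mathcal{R}^-$). Its integer transitions are therefore only the steps with $0<n<T(n)$, and every infinite trace is automatically self-avoiding. For the converse, the paper argues that an unbounded positive orbit is eventually increasing. This is where $m>d$ enters: for $x>(a+d-1)/(m-d)$ one has $T(x)>x$. That converse does not cover bounded orbits. For $T(x)=\lfloor 4x/3\rfloor$, the fixed point $n=1$ satisfies the condition in item~1, yet whether the paper's $\mathcal{R}^+$ has an infinite trace is precisely the open question of the example. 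Your version without a monotonicity constraint gives an exact correspondence between traces and orbits. It needs neither $m>d$ nor $\gcd(m,d)=1$, and it handles bounded orbits automatically. What the paper's choice buys is that it isolates unbounded orbits, which is what the Reachability Conjecture concerns. To use your $\mathcal{R}_+$ for that purpose, you would have to cut off the finitely many states $x\le (a+d-1)/(m-d)$, as the paper's example does with $x\ge 3$. That is where $m>d$ genuinely matters, as you suspected. For the same reason, your $\mathcal{R}_+$ does not quite match Figure~\ref{fig:Rplus}: the figure starts at $x=3$ precisely to exclude the fixed points $1$ and $2$.
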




Proposition \ref{reduction} implies that any decidability procedure for one-variable SLC termination, if it existed, could, given a weak Collatz mapping, determine whether the Reachability Conjecture holds for this particular mapping. This suggests that progress on decidability requires progress in understanding the behavior of weak Collatz sequences. We conclude with an example of a mapping for which the Reachability Conjecture remains, to our knowledge, open, and we are unaware of any results that imply it.

\begin{example}
Consider the weak Collatz mapping
   $$T(x) = 
\frac{4 x- i}{3}, \ \ \ \ \text{if }  4x\equiv i \ (\text{mod } 3),$$
which can be rewritten as \[T(x) = \lfloor \frac{4x}{3}\rfloor. \]
For this particular $T$, it remains open whether it is true that for every $n\ge 3$ there exists $k \in \NN$ such that $3 \mid T^{k}(n)$.

Proposition~\ref{reduction} implies that this question is equivalent to the termination question for the SLC defined by
$$\mathcal{R}^+:= \{ (x,x')\in \RR^2\mid 4x-2 \le 3x' \le 4x-1 \text{ and } x\ge3 \}.$$
On the \Cref{fig:Rplus}, $\mathcal{R}^+$ is depicted as a red area, bounded by the two lines $x' = \frac{4x-2}{3}$ and $x'=\frac{4x-1}{3}$, starting from $x=3$, since for $3\leq x$, $x<\lfloor\frac{4x}{3}\rfloor $. 
\end{example}

\section{Self-avoiding Traces} \label{unbounded}

The following theorem is the main result of the paper.

\begin{theorem} \label{main}
    If the Reachability Conjecture holds, then the termination of SLCs over one variable is decidable in polynomial time.
\end{theorem}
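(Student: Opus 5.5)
By the Observation at the end of \Cref{preliminaries}, an SLC $\mathcal{R}\subseteq\RR^2$ has an infinite trace iff it has a cycle or a self-avoiding trace, so I would decide these two properties separately.

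\textbf{Cycles.} By Theorem~\ref{onevarcycle}, it suffices to look for cycles of length at most two. These are integer points of $P_1=\{x\mid (x,x)\in\mathcal{R}\}$ and of $P_2=\{(x_1,x_2)\mid (x_1,x_2),(x_2,x_1)\in\mathcal{R}\}$, as in the proof of Proposition~\ref{bounded_ILP}. Since the dimension is fixed ($\le 2$), integer feasibility can be decided in polynomial time by Lenstra's algorithm rather than merely in $\mathsf{NP}$.

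\textbf{Self-avoiding traces.} Any self-avoiding trace in one variable is unbounded, so $|x|$ is unbounded along it. I would apply Theorem~\ref{Minkowski-Weyl} to write $\mathcal{R}=\conv(w_1,\dots,w_l)+\rec(\mathcal{R})$, and then use Proposition~\ref{classification} to classify $\rec(\mathcal{R})$ as a cone generated by at most two rays, a half-plane, or the whole plane. Far from the origin, $\mathcal{R}$ differs from $\rec(\mathcal{R})$ by only a bounded amount. The question therefore becomes whether the cone contains a direction $(u,u')$ with $|u'|>|u|$ and compatible signs, i.e.\ whether it points into the region where $|x|$ grows. I would split into cases according to how the extreme rays sit relative to the lines $x'=x$ and $x'=-x$, and treat positive and negative $x$ (plus sign alternation) separately.
\begin{itemize}
\item If some ray lies strictly in the growing region and the slices $\mathcal{R}_x$ along it are wide enough (width $\ge 1$ eventually), then an integer successor with $|x'|>|x|$ always exists, and I would construct an infinite trace explicitly.
\item If every ray lies in the non-growing region, then $|x|$ is eventually non-increasing. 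Any infinite trace is then bounded and hence, having finitely many states, contains a cycle. That case is already covered by the cycle step.
\item A ray on the diagonal or anti-diagonal, i.e.\ slope exactly $\pm1$, needs an extra look at the offset, which I would handle by comparing the bounded part $\conv(w_i)$ with the ray.
\end{itemize}

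\textbf{Main obstacle.} The hard case is a thin cone: $\rec(\mathcal{R})$ is a single ray of rational slope $m/d$ with $|m|>d$, so for large $x$ the slice $\mathcal{R}_x$ is an interval of fixed length below $1$, of the form $[(mx-c_1)/d,(mx-c_2)/d]$. Here an integer successor exists only when $mx \bmod d$ avoids some residues, and the successor is unique when it exists. The trace is then exactly a weak Collatz sequence $T(x)=(mx-(a+i))/d$, with the forbidden residue class being $mx\equiv a$ (mod $d$), possibly after shifting $a$ so that the allowed residues form a block $a+1,\dots,a+d-1$ or a sub-block of it. I would need to normalize this carefully.
\begin{itemize}
\item If the allowed block misses at least one residue class, the Reachability Conjecture says every unbounded sequence eventually hits the forbidden class. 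So no self-avoiding trace exists in that direction, and the answer is ``terminates'' unless a cycle was found.
\item If every residue class is allowed, the slice width is at least $(d-1)/d$ with suitable rounding. A successor then always exists and the sequence grows, so an infinite trace exists.
\end{itemize}
The delicate part is making this dichotomy exact, accounting for finitely many small $|x|$ where the bounded part of $\mathcal{R}$ matters. I would handle it by showing those small states either lie on a cycle or feed into the large regime.

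\textbf{Complexity.} All steps polynomial: the Minkowski–Weyl data in dimension $2$ are computable in polynomial time, the case distinctions reduce to sign comparisons of rational numbers, and the residue checks involve only $d\le$ the coefficients.
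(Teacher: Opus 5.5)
Your overall plan matches the paper's. You bound cycles to length at most two via Theorem~\ref{onevarcycle} and use Lenstra. You then do a case analysis of $\rec(\mathcal{R})$ from the Minkowski--Weyl decomposition, reducing the thin case to a weak Collatz map. Your ``every residue allowed'' versus ``some residue missing'' is exactly $\height_{d}(\mathcal{R})\ge d$ versus $\height_{d}(\mathcal{R})<d$.

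However, your ray-by-ray dichotomy fails when $\rec(\mathcal{R})$ is a line. Take $\mathcal{R}=\{(x,x')\mid x'=-2x\}$. Neither $(1,-2)$ nor $(-1,2)$ is growing with compatible signs, so your second bullet declares every infinite trace bounded. Yet $1,-2,4,-8,\dots$ is self-avoiding. The lineality case needs its own analysis, as in Lemma~\ref{two_col}:
\begin{itemize}
\item For slope below $-1$, traces alternate sign. In the thin subcase they are weak Collatz orbits with a negative multiplier, so the Reachability Conjecture is needed there too, not only for a single ray.
\item For slope exactly $-1$, the criterion is whether slices contain at least two integers, since $x\mapsto a-x\mapsto x+(b-a)$ drifts. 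An ``offset'' comparison of $\conv(w_i)$ with the ray does not capture this.
\end{itemize}
Your parenthetical ``plus sign alternation'' flags the issue but does not supply these cases.

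Second, the polynomial bound is not established. Justifying the residue check by ``$d\le$ the coefficients'' gives only pseudo-polynomial time, because $d$ can be exponential in $|\mathcal{R}|$. You also never fix where ``large $x$'' begins. This is precisely what the paper's proof of the theorem supplies:
\begin{itemize}
\item Let $M$ bound the coordinates of the $w_i$. Then $\mathcal{R}_{z_1}+(z_2-z_1)\frac{m}{d}\subseteq\mathcal{R}_{z_2}$ for all $z_2\ge z_1$, with equality once $z_1\ge M$.
\item Translation by an element of $\frac{1}{d}\ZZ$ preserves the count of points of $\frac{1}{d}\ZZ$. Hence the height is monotone and attained at the single slice $z=M$ (at $z=-M$ for a leftward ray, and at any slice for a line cone).
\item Write $\mathcal{R}_M=[\ell,u]$. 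Then ``every residue is allowed'' is the single test $\lfloor du\rfloor-\lceil d\ell\rceil+1\ge d$, a constant amount of arithmetic with no enumeration of residue classes.
\end{itemize}
With these two additions your argument would coincide with the paper's.
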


This section is devoted to proving Theorem~\ref{main}. The treatment of cycles is given in \Cref{cycles}. Now we turn our attention to self-avoiding traces. To determine the existence of self-avoiding traces, we make a case analysis, based on the geometric placement of $\mathcal{R}$ in $\RR^2$,
split into four Lemmas~\ref{two_diff}, \ref{one}, \ref{two_col}, and \ref{simple_case}. Before stating these lemmas, we introduce an auxiliary SLC that splits $\RR^2$ into different parts.

Define the following sets. 
\[I_+ = \{ (x_1,x_2)\in \RR^2 \mid 0<x_1<x_2 \}, \ \ \ \ \ \ \ \ \ \ \ I_- = \{ (x_1,x_2)\in \RR^2 \mid x_2 < x_1 < 0\},\]
\[\Delta_+ = \{ (x_1,x_2)\in \RR^2 \mid 0<x_1=x_2\},\ \
 \ \  \ \ \ \ \ \ \Delta_- = \{ (x_1,x_2)\in \RR^2 \mid x_1 = x_2 < 0\},\]
\[ \Delta = \Delta_+\cup \Delta_-,\  \ \ \ \ \ \ \ \ I = I_+ \cup I_- .\]

Note that the topological closure of $I_+$, $I_-$ and $\Delta$ are
\[\cl(I_+) = I_+ \cup \Delta_+ \cup \{ (0,x_2)\in\RR^2 \mid x_2\geq 0\},\] \[ \cl(I_-) = I_- \cup \Delta_- \cup \{ (0,x_2)\in\RR^2 \mid x_2\leq 0\},\]
\[ \cl(\Delta) = \Delta \cup \{ (0,0)\}.\]

Recall that any SLC $\mathcal{R}\subseteq \RR^2$ can, by Theorem \ref{Minkowski-Weyl}, be represented as 
\[\mathcal{R} = \conv(w_1,\dots w_l) + \rec(\mathcal{R}) =\conv(w_1,\dots,w_l) + \nonneg(v_1,\dots,v_m),\] with $m\leq 3$. Moreover, since $\mathcal{R}$ is defined by the set of integer linear inequalities, we may assume that all $w_i$ and $v_i$ have rational coefficients.

The following case distinction, presented in four lemmas, is aimed at characterizing when a given SLC admits a self-avoiding trace, and is based on the number of generators of $\rec(\mathcal{R})$ and on the intersection of $\rec(\mathcal{R})$ with $I_+, I_-,\Delta_+$ and $\Delta_-$. Each lemma treats SLCs whose recession cone has a specific number of generators.

\begin{lemma}\label{two_diff}
Let $\mathcal{R}$ be an SLC. Suppose $\rec(\mathcal{R}) = \nonneg( v_1,v_2)$ and $v_1\nparallel v_2$. Then:
\begin{enumerate}

    \item \label{two_dif_1} If $\rec(\mathcal{R}) \cap I \neq \emptyset$, then the loop has a self-avoiding trace.
    
    \item \label{two_dif_2} If $\rec(\mathcal{R}) \cap (I \cup \Delta) = \emptyset$, then the loop does not have a self-avoiding trace.
    
    \item \label{two_dif_3}  If $\rec(\mathcal{R}) \cap \Delta_+ \neq \emptyset$ and  $ \mathcal{R}_{\ZZ}\cap I_+ \neq \emptyset$, then the loop has a self-avoiding trace.

    \item \label{two_dif_4} If $\rec(\mathcal{R}) \cap (I \cup \Delta) \subseteq \Delta_-$ and $\mathcal{R}_{\ZZ}\cap I_- = \emptyset$, then the loop does not have a self-avoiding trace.
    
    \item \label{two_dif_5} If $\rec(\mathcal{R}) \cap \Delta_- \neq \emptyset$ and  $ \mathcal{R}_{\ZZ}\cap I_- \neq \emptyset$, then the loop has a self-avoiding trace.

    \item \label{two_dif_6} If $\rec(\mathcal{R}) \cap  (I\cup \Delta) \subseteq \Delta_+$ and $\mathcal{R}_{\ZZ}\cap I_+ = \emptyset$, then the loop does not have a self-avoiding trace.
\end{enumerate}
     
\end{lemma}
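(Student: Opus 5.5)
The plan is to treat the six items in three groups: the constructive items \ref{two_dif_1}, \ref{two_dif_3} and \ref{two_dif_5}, then the impossibility item \ref{two_dif_2}, and finally items \ref{two_dif_4} and \ref{two_dif_6}, which reuse the argument for item \ref{two_dif_2}. Throughout write $C=\rec(\mathcal{R})=\nonneg(v_1,v_2)$. Since $v_1\nparallel v_2$, $C$ is full-dimensional, and by Proposition~\ref{classification} it is pointed. I will repeatedly use one elementary fact. A self-avoiding trace $s_1,s_2,\dots$ takes infinitely many distinct integer values, so it is unbounded. Moreover, for any transitions $(s_i,s_{i+1})\in\mathcal{R}$ whose norm tends to infinity, every limit point of $(s_i,s_{i+1})/\|(s_i,s_{i+1})\|$ lies in $C$. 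This holds because $\mathcal{R}=\conv(w_1,\dots,w_l)+C$ with the first summand bounded.

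\textbf{Constructive items.} For item \ref{two_dif_3}, take an integer point $p=(x_0,x_1)\in\mathcal{R}_\ZZ\cap I_+$, so $0<x_0<x_1$. From $\rec(\mathcal{R})\cap\Delta_+\neq\emptyset$ we get $(1,1)\in C$, hence $p+t(1,1)\in\mathcal{R}$ for all $t\ge 0$. Set $\delta=x_1-x_0>0$ and $s_k=x_0+k\delta$. Each step $(s_k,s_{k+1})$ equals $p+(k\delta)(1,1)$, which is an integer point of $\mathcal{R}$. The sequence is strictly increasing, hence self-avoiding. Item \ref{two_dif_5} is the mirror image, using $(-1,-1)\in C$ and a point of $\mathcal{R}_\ZZ\cap I_-$.

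For item \ref{two_dif_1}, suppose without loss of generality that $C\cap I_+\neq\emptyset$. Since $I_+$ is open and $C$ is full-dimensional, there is a rational, and after scaling integer, vector $u=(a,b)\in\operatorname{int}(C)$ with $0<a<b$. Fix $w\in\mathcal{R}$. Because $u$ is interior, $w+C$ contains a disc around $w+tu$ of radius $\varepsilon t$ for some fixed $\varepsilon>0$. Therefore, for every sufficiently large real $x$, the slice $\mathcal{R}_x$ contains an interval around $\approx \tfrac{b}{a}x$ whose length grows linearly in $x$. So for all integers $x\ge N$ the slice contains an integer $x'$ with $x'>x$. Starting from any integer $s_1\ge N$ and always choosing such an $x'$ yields a strictly increasing, hence self-avoiding, trace. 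The case $C\cap I_-\neq\emptyset$ is symmetric.

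\textbf{Item \ref{two_dif_2}.} Suppose a self-avoiding trace exists. Consider the record times, i.e.\ the indices $i$ with $|s_{i+1}|>\max_{j\le i}|s_j|$. There are infinitely many, and $|s_{i+1}|\to\infty$ along them. Normalising $(s_i,s_{i+1})$ by $|s_{i+1}|$ and passing to a limit gives $u=(u_1,u_2)\in C$ with $|u_2|=1\ge|u_1|$. If $u_1\neq 0$ and $u_1,u_2$ have the same sign, then $u\in I\cup\Delta$, a contradiction. Otherwise every limit direction is of the form $(0,\pm1)$ or $(p,-q)$ with opposite signs and $q\ge |p|$.

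The key step is to exclude records that always land on the same sign. The source $s_i$ of a record must itself have appeared earlier. I would therefore also look at the transition that created the previous record, and pass to a subsequence so that both kinds of limit direction occur. If $C$ contains $(p,-q)$ and $(-r,s)$ with $p,r>0$, $q\ge p$, $s\ge r$, then
\[\lambda=\tfrac{r+s}{p+q}\quad\text{gives}\quad \lambda(p,-q)+(-r,s)=\tfrac{sp-rq}{p+q}(1,1).\]
This vector is nonzero, since otherwise $C$ would contain a line, contradicting pointedness. So $C$ meets $\Delta$, a contradiction. The vertical directions $(0,\pm1)$ require a separate argument. They mean that records are reached from states that are negligible compared with the new record. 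I would handle them by applying the same normalisation to the next transition out of the record: leaving a huge state $s_{i+1}$ must again use a non-vertical direction of $C$, and combining the two limit directions produces an element of $I\cup\Delta$. I expect this bookkeeping on the limit directions, namely showing that some pair of them always combines into $I\cup\Delta$, to be the main obstacle of the whole lemma.

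\textbf{Items \ref{two_dif_4} and \ref{two_dif_6}.} I would rerun the argument for item \ref{two_dif_2}; take item \ref{two_dif_4} for definiteness. Now $C\cap(I\cup\Delta)\subseteq\Delta_-$, so the only additional limit direction possible is $(-1,-1)$. I must therefore rule out a trace that is eventually negative and whose record transitions are asymptotically diagonal. Since $\mathcal{R}_\ZZ\cap I_-=\emptyset$, no integer transition satisfies $s_{i+1}<s_i<0$. A self-avoiding trace never has $s_{i+1}=s_i$. Hence every transition between negative states strictly increases $s$, i.e.\ decreases $|s|$. So new negative records can only be created by sign changes, and those sign changes are again excluded by the combination argument above, which now has to land in $I\cup\Delta_+$. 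Ensuring that the combination lands there, and not on the permitted $\Delta_-$, is the same delicate point as in item \ref{two_dif_2}. Item \ref{two_dif_6} is symmetric.
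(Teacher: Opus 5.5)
Your arguments for items~\ref{two_dif_1}, \ref{two_dif_3} and~\ref{two_dif_5} are correct and essentially those of the paper. The gap is in item~\ref{two_dif_2}, and therefore also in items~\ref{two_dif_4} and~\ref{two_dif_6}, which you reduce to it. You never carry out the step you yourself call the main obstacle, and both concrete suggestions you give for it fail. Limit directions of record transitions cannot by themselves produce a contradiction. Suppose the records are eventually all positive and are reached from negative states of comparable size. Then every record transition, including the one that created the previous record, has its limit direction in $C_-$. The direction in $C_+$ that your combination formula needs can only come from the positive-to-negative steps, and those are not records. Moreover, your claim that leaving a huge record ``must again use a non-vertical direction of $C$'' is false asymptotically. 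A sequence with $s_1\ge 2$ and $s_{k+1}\ge s_k^2$ consists only of record transitions, all with normalised limit $(0,1)$. Yet $(0,1)\in C$ is compatible with $C\cap(I\cup\Delta)=\emptyset$, e.g.\ $C=\{x_1\le 0\le x_2\}$. No bookkeeping of limit directions rules such a trace out; a non-asymptotic argument is needed.

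The paper splits on signs instead of records.
\begin{itemize}
\item If the trace changes sign infinitely often, the steps with $s_i<0<s_{i+1}$ and those with $s_i>0>s_{i+1}$ have pairwise distinct sources. Hence $\mathcal{R}\cap C_-$ and $\mathcal{R}\cap C_+$ are unbounded, and Proposition~\ref{infcap} yields nonzero $u\in C\cap C_-$ and $w\in C\cap C_+$. Your formula (this is Lemma~\ref{auxprop}) then puts a nonzero vector of $C$ on $\Delta$.
\item If the trace is eventually positive, it makes infinitely many steps $0<s_i<s_{i+1}$ with distinct, hence unbounded, sources. Write $(s_i,s_{i+1})=w^i+v^i$ with $w^i\in\conv(w_1,\dots,w_l)$. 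Along a subsequence, the normalised $v^i$ converge to a point of $\cl(I_+)\cap C$, which must be $(0,1)$. Because $C\cap I_+=\emptyset$, unit vectors of $C$ close to $(0,1)$ have nonpositive first coordinate. So $s_i\le K$ along that subsequence, where $K$ bounds $\conv(w_1,\dots,w_l)$, contradicting distinctness.
\item The eventually negative case is symmetric.
\end{itemize}
For item~\ref{two_dif_4}, your use of $\mathcal{R}_\ZZ\cap I_-=\emptyset$ correctly disposes of eventually negative traces, and eventually positive ones go as in item~\ref{two_dif_2}. What you leave open is how the combination avoids the permitted half-diagonal $\Delta_-$. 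The missing observation is that $w$ and $u$ are non-collinear, so the combined vector is a strict positive combination of them. It therefore lies in the interior of the two-dimensional pointed cone $C$. A neighbourhood of this point of $\Delta$ is then contained in $C$ and meets $I$, contradicting $C\cap I=\emptyset$. Item~\ref{two_dif_6} is symmetric.
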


\begin{proof}[Sketch of proof]
This lemma handles the case where $\rec(\mathcal{R})$ is pointed and generated by two noncollinear vectors. In this regime, $\mathcal{R}$ is “large” in the sense that the number of integer points in its vertical slices, $|\mathcal{R}_{n}\cap \ZZ|$, is unbounded as $n$ varies. Consequently, we can decide whether the loop admits a self-avoiding trace by inspecting the direction of this cone.

Statements \ref{two_dif_1} and \ref{two_dif_2} are the essential ones.

The case where the recession cone intersects $I$, as in \Cref{fig:two_inc}, corresponds to Statement~\ref{two_dif_1}. In this case, it points in a direction in which the absolute value of the state increases; thus, $\mathcal{R}$ admits a self-avoiding trace.

The case where the recession cone does not intersect $I\cup \Delta$, as in \Cref{fig:two_des}, corresponds to Statement~\ref{two_dif_2}. In this case, it points only in directions in which the absolute value of the state decreases; thus, $\mathcal{R}$ does not admit a self-avoiding trace.

The case in which the cone intersects $\Delta$ is slightly more delicate and requires analyzing the intersection $\mathcal{R}\cap I$, which is done in the remaining statements.

The complete proof is given in \Cref{proofs}.
\end{proof}

\begin{figure}[t]
\centering

\begin{subfigure}[t]{0.48\textwidth}
  \centering
  \vspace{0pt}%
  \begin{tikzpicture}[scale=0.9]
  \def\L{3.2}

  \draw[step=1, very thin] (-\L,-\L) grid (\L,\L);
  \draw[->] (-\L-0.2,0) -- (\L+0.2,0) node[right] {$x_1$};
  \draw[->] (0,-\L-0.2) -- (0,\L+0.2) node[above] {$x_2$};

  \draw[thin, dashed] (-\L,-\L) -- (\L,\L);

  \fill[blue!30, opacity=0.5] (0,0) -- (0,\L) -- (\L,\L) -- cycle;

  \fill[red!30, opacity=0.5] (-\L,-\L) -- (0,-\L) -- (0,0) -- cycle;

  \draw[very thick, green!60!black] (0,0) -- (\L,\L);
  \draw[very thick, green!60!black] (0,0) -- (-\L,-\L);

  \fill (0,0) circle (2pt);
  \node[below left] at (0,0) {\small $0$};

  \begin{scope}[shift={(1.5,-2.95)}]
    \draw[rounded corners, fill=white, opacity=0.9] (0,0) rectangle (1.5,1.55);
    \fill[blue!30, opacity=0.5] (0.15,1.20) rectangle (0.45,1.40);
    \node[anchor=west] at (0.55,1.30) {$I_+$};
    \fill[red!30, opacity=0.5] (0.15,0.85) rectangle (0.45,1.05);
    \node[anchor=west] at (0.55,0.95) {$I_-$};
    \draw[very thick, green!60!black] (0.15,0.55) -- (0.45,0.55);
    \node[anchor=west] at (0.55,0.55) {$\Delta$};
  \end{scope}

\coordinate (A) at (0,-1);
\coordinate (B) at (1,-2);

\pgfmathsetmacro{\tA}{min(\L, (\L+1)/4)} 
\coordinate (Aend) at ({0 + 1*\tA},{-1 + 4*\tA}); 

\pgfmathsetmacro{\tB}{min((\L-1)/2, (\L+2)/2)} 
\coordinate (Bend) at ({1 + 2*\tB},{-2 + 2*\tB}); 

\fill[gray, opacity=0.25] (A) -- (B) -- (Bend) -- (Aend) -- cycle;

\draw[thick] (A) -- (B);
\draw[thick,->] (A) -- (Aend);
\draw[thick,->] (B) -- (Bend);

\fill (A) circle (2pt);
\fill (B) circle (2pt);

\end{tikzpicture}

  \vspace{0.35em}
  \caption{SLC that points towards $I_+$.}
  \label{fig:two_inc}
\end{subfigure}%
\hfill%
\begin{subfigure}[t]{0.48\textwidth}
  \centering
  \vspace{0pt}%
  \begin{tikzpicture}[scale=0.9]
  \def\L{3.2}

  \draw[step=1, very thin] (-\L,-\L) grid (\L,\L);
  \draw[->] (-\L-0.2,0) -- (\L+0.2,0) node[right] {$x_1$};
  \draw[->] (0,-\L-0.2) -- (0,\L+0.2) node[above] {$x_2$};

  \draw[thin, dashed] (-\L,-\L) -- (\L,\L);

  \fill[blue!30, opacity=0.5] (0,0) -- (0,\L) -- (\L,\L) -- cycle;

  \fill[red!30, opacity=0.5] (-\L,-\L) -- (0,-\L) -- (0,0) -- cycle;

  \draw[very thick, green!60!black] (0,0) -- (\L,\L);
  \draw[very thick, green!60!black] (0,0) -- (-\L,-\L);

  \fill (0,0) circle (2pt);
  \node[below left] at (0,0) {\small $0$};

  \begin{scope}[shift={(1.5,-2.95)}]
    \draw[rounded corners, fill=white, opacity=0.9] (0,0) rectangle (1.5,1.55);
    \fill[blue!30, opacity=0.5] (0.15,1.20) rectangle (0.45,1.40);
    \node[anchor=west] at (0.55,1.30) {$I_+$};
    \fill[red!30, opacity=0.5] (0.15,0.85) rectangle (0.45,1.05);
    \node[anchor=west] at (0.55,0.95) {$I_-$};
    \draw[very thick, green!60!black] (0.15,0.55) -- (0.45,0.55);
    \node[anchor=west] at (0.55,0.55) {$\Delta$};
  \end{scope}

\coordinate (A) at (-2, 1);
\coordinate (B) at (-1,-1);

\pgfmathsetmacro{\tA}{min((\L+2)/5, (\L-1)/1)} 
\coordinate (Aend) at ({-2 + 5*\tA},{ 1 + 1*\tA}); 

\pgfmathsetmacro{\tB}{(\L+1)/4} 
\coordinate (Bend) at ({-1 + 4*\tB},{-1});

\fill[gray, opacity=0.25] (A) -- (B) -- (Bend) -- (Aend) -- cycle;

\draw[thick] (A) -- (B);
\draw[thick,->] (A) -- (Aend);
\draw[thick,->] (B) -- (Bend);

\fill (A) circle (2pt);
\fill (B) circle (2pt);

\end{tikzpicture}

  \vspace{0.35em}
  \caption{SLC that does not point towards $I\cup \Delta$.}
  \label{fig:two_des}
\end{subfigure}

\caption{}
\label{fig:two_diff}
\end{figure}

The next lemma, Lemma~\ref{one}, treats the case in which $\rec(\mathcal{R})$ is generated by a single vector. In this situation $\mathcal{R}$ is infinite but “thin”: each vertical slice $\mathcal{R}_{n}$ is bounded. We quantify this thinness via the $\height$.

\begin{definition}
   For $T \subseteq \RR^{2}$ and $p \in \NN$, define the \emph{$p$-height} of $T$ by\[\height_p(T) := \sup \{ |T_{z}\cap (\frac{1}{p}\ZZ)| \mid z\in \ZZ\}.\] 
\end{definition}

\begin{lemma} \label{one}
    Let $\mathcal{R}$ be an SLC. Suppose $\rec(\mathcal{R}) = \nonneg(v)$ and $v = (p,q)$ for $p\in \ZZ$, $q\in \ZZ$ and GCD$(p,q) =1$. Then:
    \begin{enumerate}
        \item \label{one_1} If $ |p| < |q|$, $\sign(p) = \sign(q)$ and $ \height_{|p|}( \mathcal{R}) \geq |p|$, then the loop has a self-avoiding trace.

        \item \label{one_2} If $\sign(p) \neq \sign(q)$, then the loop does not have a self-avoiding trace.

        \item \label{one_3} If $|p| > 1$ and $1 < \height_{|p|}( \mathcal{R}) < |p| $, then the loop does not have a self-avoiding trace if Conjecture \ref{simpleconj} holds.

        \item \label{one_4} If $|p| > 1$ and $\height_{|p|}( \mathcal{R}) \leq 1 $, then the loop does not have a self-avoiding trace.

        \item \label{one_5} If $\height_{|p|}( \mathcal{R}) = 0 $, then the loop does not have a self-avoiding trace.

        \item \label{one_6} If $|p| > |q|$, then the loop does not have a self-avoiding trace.

        \item \label{one_7} If $p = q = 1$  and $\mathcal{R}_\ZZ  \cap I_+ \neq \emptyset$, then the loop has a self-avoiding trace.

        \item \label{one_8} If $p = q = 1$  and $\mathcal{R}_\ZZ  \cap I_+ = \emptyset$, then the loop does not have a self-avoiding trace.

        \item \label{one_9} If $p = q = -1$  and $\mathcal{R}_\ZZ  \cap I_- \neq \emptyset$, then the loop has a self-avoiding trace.

        \item \label{one_10} If $p = q = -1$  and $\mathcal{R}_\ZZ  \cap I_- = \emptyset$, then the loop does not have a self-avoiding trace.        
    \end{enumerate}
\end{lemma}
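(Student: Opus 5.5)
The plan is to use the Minkowski–Weyl representation $\mathcal{R}=\conv(w_1,\dots,w_l)+\nonneg(v)$ with $v=(p,q)$. The key structural fact is that $\mathcal{R}$ lies within bounded distance $D$ of the ray $w_1+\RR_{\ge0}v$. Hence, for $z\in\ZZ$, the slice $\mathcal{R}_z$ is an interval of bounded length. It is nonempty only when $z$ lies in a half-line ($z\ge z_0$ if $p>0$, $z\le z_0$ if $p<0$), and for large $|z|$ it is approximately $\{x' : x'\approx \tfrac{q}{p}z + c\}$. The slice length is eventually constant, because far out the slice is cut by the same two edges, which are parallel to $v$. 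By periodicity modulo $|p|$, the count of points of $\tfrac1{|p|}\ZZ$ in $\mathcal{R}_z$ is then eventually periodic in $z$. Every self-avoiding trace is unbounded (finitely many states in a bounded region would force a repetition). So I will reason about the far end of the trace, where $(s_i,s_{i+1})$ lies close to the ray and $s_i$ has the sign of $p$.

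\textbf{Negative items.} For item~\ref{one_2}, $\sign(p)\ne\sign(q)$ means that far along the ray a state of sign $\sign(p)$ is sent to a state of the opposite sign, which lies outside the domain half-line. Hence $|s_i|$ is bounded along any trace, so there is no self-avoiding trace. (The case $q=0$ is handled the same way, since the successor is then bounded.) For item~\ref{one_6}, $|p|>|q|$ gives $|s_{i+1}|\le \tfrac{|q|}{|p|}|s_i|+C$, so the trace is bounded. Item~\ref{one_5} is immediate: empty integer slices mean no transitions. For items~\ref{one_8} and~\ref{one_10}, when $p=q=\pm1$ the cone lies on the diagonal. Any unbounded trace far out must therefore use integer points with $x'\ne x$ in the correct direction, and these lie in $I_\pm$; monotone growth is required for unboundedness, but none of these points exist. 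Item~\ref{one_4} uses the fact that with $|p|>1$, height at most $1$ forces $x'$ to be essentially determined and integrality to fail in some residue classes. Indeed, $x'\in\ZZ$ with $|px'-qz-c|$ bounded implies that $qz$ lies in at most one residue class modulo $|p|$ shifted by a constant. The successor's residue is then determined, so the trace must stay in one residue class; this is impossible for an unbounded trace by Proposition~\ref{collatz_lemma}, since the map is a generalized Collatz map $x\mapsto (qx-r)/p$.

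\textbf{Positive items.} For item~\ref{one_1}, height $\ge|p|$ with $|q|>|p|$ means that eventually every residue of $qz$ modulo $|p|$ admits an integer $x'$. From a large $z$ I can therefore always choose an integer successor whose modulus grows by a factor of roughly $|q|/|p|>1$. Iterating gives a strictly monotone, hence self-avoiding, trace. For items~\ref{one_7} and~\ref{one_9}, take an integer point $(a,b)\in I_+$ in $\mathcal{R}$, so $0<a<b$. Adding integer multiples of $(1,1)\in\rec(\mathcal{R})$ keeps us in $\mathcal{R}$, so $(x,x+(b-a))\in\mathcal{R}$ for all integers $x\ge a$, which gives a strictly increasing trace.

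\textbf{Main obstacle.} I expect item~\ref{one_3} to be the hardest. Here the integer successors of $z$ are exactly those $x'$ with $px' \in [qz+c_1, qz+c_2]$, and $1<$ height $<|p|$ means that this window of consecutive residues misses some residues. I plan to show that, after rescaling, the transition on the far end is exactly a weak Collatz mapping $T(x)=(mx-(a+i))/d$ with $d=|p|$, $m=\pm q$, and a choice among $\mathrm{height}$ admissible offsets $i$. The traces that survive are those avoiding the residue class $mx\equiv a$, which is the class with no admissible successor. Because of the nondeterminism in the choice of offset, I would first try to normalize to the extreme choice (smallest admissible offset) and compare traces monotonically, reducing to a single weak Collatz sequence. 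The Reachability Conjecture then forces that sequence to hit the bad class. Making this reduction from the nondeterministic choice to a single map rigorous is the step I expect to be delicate.
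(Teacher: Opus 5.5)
\textbf{Gap in item~\ref{one_3}.} Apart from the small omissions noted at the end, your treatment of the other items is sound. Item~\ref{one_3}, which you leave open, has a genuine gap, and it comes from misreading the structure.

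You assume the integer transition is nondeterministic (``a choice among $\height$ admissible offsets''), and you propose to normalize to an extreme offset and compare traces monotonically. As stated this would not go through. The Reachability Conjecture concerns orbits of a single deterministic map, and comparing the sizes of states says nothing about which residue classes a trace visits, so avoidance of the bad class would not transfer.

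However, there is nothing to normalize. A slice $\mathcal{R}_z$ is an interval, so its points in $\frac{1}{|p|}\ZZ$ are at most $h:=\height_{|p|}(\mathcal{R})<|p|$ consecutive ones. These span a length $(h-1)/|p|<1$, so each integer state has at most one integer successor. In your notation, take $p>0$ (the case $p<0$ follows via $-\mathcal{R}$). The points of $\frac1p\ZZ$ in $\mathcal{R}_z$ are $\frac{qz-a-i}{p}$ for $1\le i\le h$, and only the offset with $qz\equiv a+i \pmod p$ gives an integer.

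\textbf{How to close it.} The integer transitions therefore coincide with the single weak Collatz map $T(x)=\frac{qx-(a+i)}{p}$, where $qx\equiv a+i \pmod p$ and $0\le i<p$, on the classes $1\le i\le h$. States with $qx\equiv a$ (offset $0$, which is excluded since $h<p$) have no successor. A self-avoiding trace satisfies $|s_k|\to\infty$, so its tail is an unbounded $T$-orbit avoiding $qx\equiv a$. This contradicts Conjecture~\ref{simpleconj}.

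The paper obtains this globally rather than only far out. It picks $z$ attaining the height and uses the recession shift to get $\mathcal{R}_{z'}\cap\frac1p\ZZ\subseteq(\mathcal{R}_z\cap\frac1p\ZZ)+\frac qp(z'-z)$ for every $z'\in\ZZ$. This is an equality for $z'>z$ by maximality of $h$, and an inclusion for $z'<z$ by shifting forward. Hence $\mathcal{R}_\ZZ\subseteq\{(x,y)\in\ZZ^2 : y=\frac{qx-a-i}{p},\ 1\le i\le h\}$.

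\textbf{Smaller omissions.} The same shift argument is missing from your item~\ref{one_1}. Since $\height$ is a supremum over all slices, you need the count of $\frac1{|p|}\ZZ$-points to be monotone along $v$ to conclude that every far slice has at least $|p|$ such points. Also, $p=0$ in item~\ref{one_2} is not covered by your half-line description, though that case is trivial because the domain is then bounded.
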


\begin{proof}[Sketch of proof]
In this case, the cone is again pointed, but it is generated by a single vector. Since the cone is not “large” as in Lemma~\ref{two_diff}, it requires a more delicate analysis.

The cases where the recession cone does not intersect $I\cup \Delta$ (Statements~\ref{one_2} and~\ref{one_6}) are similar to Statement~\ref{two_dif_2} of Lemma~\ref{two_diff} and are handled in the same way.

The cases where the recession cone intersects $I$ are the essential ones and correspond to Statements~\ref{one_1}, \ref{one_3}, and~\ref{one_4}, which are the most important in this Lemma.

The case where the height of $\mathcal{R}$ is too small, as in \Cref{fig:Rthin}, corresponds to Statement~\ref{one_4}. In this regime, traces correspond to a weak Collatz sequence that visits only one residue class. Thus, the non-existence of a self-avoiding trace follows from Proposition~\ref{collatz_lemma}.

The case where the height of $\mathcal{R}$ is large enough, as in \Cref{fig:Rthick}, corresponds to Statement~\ref{one_1}. This case is analogous to Statement~\ref{two_dif_1} of Lemma~\ref{two_diff}. Here we show the existence of a self-avoiding trace.

The case where the height of $\mathcal{R}$ is at least $1$ but less than $|p|$, as in \Cref{fig:Rplus}, corresponds to Statement~\ref{one_3}. In this case, every trace is a weak Collatz sequence that visits fewer than $|p|$ residue classes. If the Reachability Conjecture~\ref{simpleconj} holds, then the loop does not admit a self-avoiding trace.

The complete proof is presented in \Cref{proofs}.
\end{proof}

\begin{figure}[t]
\centering

\begin{subfigure}[t]{0.48\textwidth}
  \centering
  \vspace{0pt}%
  \begin{tikzpicture}[scale=0.85]
    \def\xmin{0}
    \def\xstart{3}
    \def\xmax{7}
    \def\ymin{0}
    \def\ymax{10}

    \path[use as bounding box] (\xmin-0.9,\ymin-1.0) rectangle (\xmax+1.0,\ymax+1.0);

    \draw[step=1, very thin] (\xmin,\ymin) grid (\xmax,\ymax);

    \draw[->] (\xmin-0.3,0) -- (\xmax+0.5,0) node[right] {$x$};
    \draw[->] (0,\ymin-0.3) -- (0,\ymax+0.5) node[above] {$x'$};

    \foreach \t in {\xmin,...,\xmax} {
      \draw (\t,0.10) -- (\t,-0.10) node[below] {\small $\t$};
    }

    \foreach \t in {1,...,\ymax} {
      \draw (0.10,\t) -- (-0.10,\t) node[left] {\small $\t$};
    }

    \draw[thick,->] (\xstart,{(4*\xstart-1)/3}) -- (\xmax,{(4*\xmax-1)/3});
  \end{tikzpicture}

  \vspace{0.35em}
  \caption{Thin SLC.}
  \label{fig:Rthin}
\end{subfigure}%
\hfill%
\begin{subfigure}[t]{0.48\textwidth}
  \centering
  \vspace{0pt}%
  \begin{tikzpicture}[scale=0.85 ]
    \def\xmin{0}
    \def\xstart{3}
    \def\xmax{7}
    \def\ymin{0}
    \def\ymax{10}

    \path[use as bounding box] (\xmin-0.9,\ymin-1.0) rectangle (\xmax+1.0,\ymax+1.0);

    \draw[step=1, very thin] (\xmin,\ymin) grid (\xmax,\ymax);

    \draw[->] (\xmin-0.3,0) -- (\xmax+0.5,0) node[right] {$x$};
    \draw[->] (0,\ymin-0.3) -- (0,\ymax+0.5) node[above] {$x'$};

    \foreach \t in {\xmin,...,\xmax} {
      \draw (\t,0.10) -- (\t,-0.10) node[below] {\small $\t$};
    }

    \foreach \t in {1,...,\ymax} {
      \draw (0.10,\t) -- (-0.10,\t) node[left] {\small $\t$};
    }

    \draw[thick,->] (\xstart,{(4*\xstart-2)/3}) -- (\xmax,{(4*\xmax-2)/3});
    \draw[thick,->] (\xstart,{(4*\xstart)/3}) -- (\xmax,{(4*\xmax)/3});

    \fill[red, opacity=0.25]
      (\xstart,{(4*\xstart-2)/3}) --
      (\xmax,{(4*\xmax-2)/3}) --
      (\xmax,{(4*\xmax)/3}) --
      (\xstart,{(4*\xstart)/3}) -- cycle;
  \end{tikzpicture}

  \vspace{0.35em}
  \caption{Thick SLC.}
  \label{fig:Rthick}
\end{subfigure}

\caption{}
\label{fig:twofigs2}
\end{figure}

\begin{lemma} \label{two_col}
    Let $\mathcal{R}$ be an SLC. Suppose $\rec(\mathcal{R}) = \nonneg( v_1,v_2)$ and $v_1 = - v_2 = (p,q)$ for $p\in \ZZ_{\ge 0}$, $q\in \ZZ$ and GCD$(p,q) =1$. Then:
    \begin{enumerate}
        \item \label{two_col_1} If $ 0 < p < |q|$ and $\height_{p}( \mathcal{R}) \geq p$, then the loop has a self-avoiding trace.

        \item \label{two_col_2} If $p > 1$ and $1 < \height_{p}(  \mathcal{R}) < p$, then the loop does not have a self-avoiding trace if Conjecture \ref{simpleconj} holds.

        \item \label{two_col_3} If $p > 1$ and $\height_{p}( \mathcal{R}) \leq 1 $, then the loop does not have a self-avoiding trace.

        \item \label{two_col_4} If $\height_{|p|}(  \mathcal{R}) = 0 $, then the loop does not have a self-avoiding trace.

        \item \label{two_col_5} If $p > |q|$, then the loop does not have a self-avoiding trace.

        \item \label{two_col_6} If $p = q$  and $\mathcal{R}_\ZZ  \cap I \neq \emptyset$, then the loop has a self-avoiding trace.

        \item \label{two_col_7} If $p=q$ and $\mathcal{R}_\ZZ \cap I = \emptyset$, then the loop does not have a self-avoiding trace.

        \item \label{two_col_8} If $-p = q$  and $\height_1(\mathcal{R}) \geq 2$, then the loop has a self-avoiding trace.

        \item \label{two_col_9} If $-p = q$  and $\height_1(\mathcal{R}) \leq 1$, then the loop does not have a self-avoiding trace.  

         \item \label{two_col_10} If $p = 0$, then the loop does not have a self-avoiding trace. 
    \end{enumerate}
\end{lemma}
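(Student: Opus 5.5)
The plan is to exploit the structure forced by a lineality space: since $\rec(\mathcal{R}) = \nonneg(v,-v)$ with $v=(p,q)$, Theorem~\ref{Minkowski-Weyl} gives $\mathcal{R} = \conv(w_1,\dots,w_l) + \RR v$. Hence $\mathcal{R}$ is a strip of bounded width around the line $\RR v$. We will write $\mathcal{R} = \{(x,x') \mid \alpha \le q x - p x' \le \beta\}$ for rationals $\alpha \le \beta$; this is possible because a closed convex set whose recession cone is exactly a line is the strip between two lines parallel to $v$.

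Each statement is then a question about the one-variable relation $\alpha \le qx - px' \le \beta$ on $\ZZ$. We handle the cases in the following order.

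\textbf{Case $p=0$ (Statement~\ref{two_col_10}).} Here $v=(0,\pm1)$, so $\alpha \le \pm x \le \beta$. The states are confined to a finite set, every infinite trace repeats, and no self-avoiding trace exists.

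\textbf{Case $p>|q|$ (Statement~\ref{two_col_5}).} From $x' = (qx - c)/p$ with $c \in [\alpha,\beta]$ we get $|x'| \le (|q|/p)|x| + C$. Thus $|x|$ eventually decreases into a bounded set, and again no self-avoiding trace exists.

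\textbf{Case $p=q$ (Statements~\ref{two_col_6}, \ref{two_col_7}).} Now $x'-x \in [-\beta,-\alpha]$.
\begin{itemize}
\item If some integer transition lies in $I$, we may take $x'-x = \delta \ne 0$ with $x\delta>0$. Because the strip is invariant under the diagonal shift, iterating $x \mapsto x+\delta$ stays in $\mathcal{R}_\ZZ$ and moves monotonically, giving a self-avoiding trace.
\item Conversely, suppose $\mathcal{R}_\ZZ \cap I = \emptyset$. The same invariance shows that either all integer increments $x'-x$ are $0$, or the available increments have a fixed sign (for instance, all nonpositive). Nonzero increments whose sign keeps $|x|$ growing are excluded, since such a step taken at a point of the right sign would lie in $I$. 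The remaining step sizes then push $x$ toward $0$ or across it. We need to rule out oscillation: if increments of both signs existed, some shift would lie in $I$. So the increments have a single sign, and the trace is monotone toward a bounded region, which prevents self-avoidance.
\end{itemize}

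\textbf{Case $q=-p$, i.e.\ $v=(1,-1)$ (Statements~\ref{two_col_8}, \ref{two_col_9}).} The relation reads $x+x' \in [s_1,s_2]$. If $\height_1 \ge 2$, there are two consecutive integer sums $s, s+1$ available. Using $x \mapsto s-x$ and $x\mapsto s+1-x$ alternately yields $x, s+1-x, x-1, s+2-x,\dots$, whose absolute values grow, so the trace is self-avoiding. If $\height_1 \le 1$, the map is $x\mapsto s-x$, an involution, so every trace has period two.

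\textbf{Cases $0<p<|q|$ (Statements~\ref{two_col_1}--\ref{two_col_4}).} These mirror Lemma~\ref{one}. Among the two rays $\pm v$, one has matching signs of $p$ and $q$ and points into $I$ when $|q|>p$ (into $I_+$ if $q>0$, into $I_-$ if $q<0$). We therefore reduce to Lemma~\ref{one}, either directly or by re-running its argument with $\mathcal{R}$ in place of $\mathcal{R} \cap \{\text{half-plane}\}$. The key observation is that the slice $\mathcal{R}_z$ is the interval $[(qz-\beta)/p, (qz-\alpha)/p]$, so the number of its points in $\frac1p\ZZ$ is independent of $z$ and equals $\height_p(\mathcal{R})$.
\begin{itemize}
\item Integer successors $x'$ must satisfy $qx - px' = c$ for integers $c \in [\alpha,\beta]$. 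When $\height_p \ge p$, these $c$ cover all residues modulo $p$, so every state has a successor. Choosing the largest successor (in the growing direction) gives $|x'|\ge (|q|/p)|x| - C$, which is unbounded growth, hence a self-avoiding trace. This is Statement~\ref{two_col_1}.
\item When $\height_p < p$, some residues $qx \bmod p$ have no admissible $c$, and the traces that use the maximal choice correspond to a weak Collatz mapping with $d=p$ and $m=q$ (or $-q$), which avoids the residue class $a$. Statement~\ref{two_col_4} is immediate, since there are no transitions at all. For Statement~\ref{two_col_3}, the trace is deterministic and at most one residue class is ever usable, so Proposition~\ref{collatz_lemma} shows that a self-avoiding (necessarily unbounded) trace cannot exist. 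For Statement~\ref{two_col_2}, we invoke Conjecture~\ref{simpleconj}.
\end{itemize}

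\textbf{Main obstacle.} The hardest step is Statement~\ref{two_col_2}. With $\height_p = h$ and $1<h<p$, the relation is nondeterministic: each state has up to one successor per admissible $c$, and a trace may switch choices. Such a trace is not literally an orbit of a single map $T$. I would first try to show that any self-avoiding trace must eventually be monotone in $|x|$ and diverge, since bounded traces repeat. Next I would argue that divergence forces the trace to lie in the growing half-plane, and that one can normalize the offset: the set of admissible $c$ is an interval of $h$ consecutive integers, so residue class $a$ modulo $p$ is exactly the one never hit. Under the right choice of $a$, every step of the trace then agrees with the weak Collatz map $T(x) = (mx-(a+i))/d$. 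If the trace is not literally an orbit of $T$, I would fall back on a comparison or monotonicity argument showing that existence of an avoiding nondeterministic trace implies existence of an avoiding deterministic orbit, possibly after shifting by a constant.
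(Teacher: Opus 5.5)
The genuine gap is Statement~\ref{two_col_2}, the case you single out as the main obstacle. You leave it as a tentative plan, and the plan rests on a false premise. When $1<h<p$ the integer relation is in fact deterministic. The admissible offsets $c=qx-px'$ are the $h$ consecutive integers $c_0,\dots,c_0+h-1$ in $[\alpha,\beta]$, and $x'\in\ZZ$ forces $c\equiv qx\pmod p$. Since $h<p$, these offsets are pairwise incongruent modulo $p$, so each integer $x$ has at most one integer successor.

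Concretely, put $d=p$, $m=q$, $a=c_0-1$. Every integer transition $(x,x')$ then has $x'=\frac{qx-(a+i)}{p}$ with $qx\equiv a+i\pmod p$ and $1\le i\le h\le p-1$. That is, $x'=T(x)$ for the weak Collatz mapping $T$, and states with $qx\equiv a\pmod p$ have no successor at all. Hence every infinite trace is literally an orbit $\{T^k(n)\}_{k\in\NN}$ with $qT^k(n)\not\equiv a\pmod p$ for all $k$. A self-avoiding one is unbounded, which contradicts Conjecture~\ref{simpleconj} directly. This is exactly what the paper's inclusion $\mathcal{R}_\ZZ\subseteq\{(x,y)\mid y=\frac{qx-a-i}{p},\ 1\le i\le h\}$ encodes.

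The ``maximal choice'', the eventual monotonicity and the ``comparison argument'' you propose are therefore unnecessary. Your intermediate claim that divergence puts the trace in a growing half-plane is also false for $q<0$, where large states alternate in sign. The conjecture needs no sign condition anyway.

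Two smaller points. First, your reasoning in Statement~\ref{two_col_7} is muddled. You allow nonzero increments of a single sign to survive and claim they give a trace ``monotone toward a bounded region''. That is false: $x'=x-1$ has the self-avoiding trace $0,-1,-2,\dots$. The correct one-line argument uses the diagonal invariance you already noted. If $\delta\neq 0$ is an admissible increment, then $(\sign(\delta),\sign(\delta)+\delta)\in\mathcal{R}_\ZZ\cap I$. So $\mathcal{R}_\ZZ\cap I=\emptyset$ forces $x'=x$ for every integer transition.

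Second, for $q<0$ neither ray $\pm v$ has matching signs, contrary to what you write. So no reduction to Lemma~\ref{one} is available there; for a single ray with $\sign(p)\neq\sign(q)$ that lemma concludes the opposite. Your direct estimate $|x'|\ge(|q|/p)|x|-C$ is what actually proves Statement~\ref{two_col_1}, and it works for both signs of $q$.

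With these repairs, the rest is sound and essentially follows the paper. This covers the strip description $\mathcal{R}=\{(x,x')\mid \alpha\le qx-px'\le\beta\}$ and your treatment of Statements~\ref{two_col_1}, \ref{two_col_3}--\ref{two_col_5} and \ref{two_col_8}--\ref{two_col_10}.
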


\begin{proof}[Sketch of proof]
This lemma treats the case where $\rec(\mathcal{R})$ is a line (i.e., generated by vectors $v_1$ and $-v_1$). In this setting, $\mathcal{R}$ is unbounded but “thin”: each vertical slice $\mathcal{R}_{n}$ is bounded. Accordingly, we analyze the height $\height_p(\mathcal{R})$ to determine whether a self-avoiding trace exists.

The intuition is the same as in Lemma~\ref{one}, but the argument requires some additional technicalities. The complete proof is presented in \Cref{proofs}.
\end{proof}

\begin{lemma} \label{simple_case}
Let $\mathcal{R}$ be an SLC. The following statements hold.
\begin{enumerate}
    \item If $\rec(\mathcal{R}) = \nonneg ( v_1,v_2,v_3)$ and it cannot be generated by two vectors, then the loop has a self-avoiding trace.
    \item If $\rec(\mathcal{R}) = \{ 0 \} $, then the loop does not have a self-avoiding trace.
\end{enumerate}
    
\end{lemma}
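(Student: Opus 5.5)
For the second statement, the plan is short. If $\rec(\mathcal{R})=\{0\}$, then by Theorem~\ref{Minkowski-Weyl} we have $\mathcal{R}=\conv(w_1,\dots,w_l)$, so $\mathcal{R}$ is bounded. Hence $\mathcal{R}_\ZZ$ is finite, and so is the set of states appearing in any trace. An infinite sequence over a finite set repeats some state, so no trace can be self-avoiding.

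For the first statement, Proposition~\ref{classification} tells us that $\rec(\mathcal{R})$ is either a half-plane or all of $\RR^2$. In both cases it contains a half-plane $H=\{u\in\RR^2 \mid \langle c,u\rangle\ge 0\}$ for some nonzero $c=(c_1,c_2)$. Any closed half-plane through the origin meets $I$ in an open set. Indeed, $I_+$ and $I_-=-I_+$ are opposite open cones, and a closed half-plane through the origin always contains at least one of $u$ or $-u$, for $u$ in the interior of $I_+$.

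The cleanest route avoids appealing to Lemma~\ref{two_diff}, since $\rec(\mathcal{R})$ is not generated by two vectors here. Instead, pick rational vectors $v_1,v_2\in H\cap I_+$ (or $H\cap I_-$) that are not collinear. Set $\mathcal{R}'=\{w_1\}+\nonneg(v_1,v_2)\subseteq\mathcal{R}$, where $w_1$ is a rational vertex from the Minkowski--Weyl decomposition. The containment holds because $\nonneg(v_1,v_2)\subseteq \rec(\mathcal{R})$.

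Rather than applying Lemma~\ref{two_diff}(\ref{two_dif_1}) to $\mathcal{R}'$, I would construct a trace directly, say in the case $v_1,v_2\in I_+$. The set $\mathcal{R}'$ contains translates of arbitrarily large discs, in directions strictly inside $I_+$. Concretely, there is a direction $u=(1,\lambda)$ with $\lambda>1$, and radii growing linearly, such that for all large integers $x$, the slice $\mathcal{R}_x$ contains an interval of length greater than $1$ around $\lambda x$, shifted by a bounded amount. This slice therefore contains an integer $x'$ with $x' > x+1$ once $x$ is large. Starting from a large positive integer $x_0$, we define $x_{k+1}$ as such an integer. This gives a strictly increasing, hence self-avoiding, trace.

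The main obstacle is making the "interval of length $>1$ above $x$" claim precise. The plan is as follows. Since $v_1\nparallel v_2$ both lie in the open cone $I_+$, the cone $\nonneg(v_1,v_2)$ has nonempty interior. We take $u$ in its interior with $u_2>u_1>0$. The vertical segment through $w_1+tu$ inside $\{w_1\}+\nonneg(v_1,v_2)$ has length proportional to $t$. Evaluating it at $t=(x-w_{1,1})/u_1$ gives, for large integer $x$, an interval in $\mathcal{R}_x$ of length at least $2$. The centre of that interval is about $(u_2/u_1)x > x+1$ for large $x$, so the interval contains an integer strictly bigger than $x$. The case $H\cap I_-$ is symmetric, producing a strictly decreasing negative trace.
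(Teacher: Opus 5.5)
Your proposal is correct and follows the paper's route. Statement~2 is the same boundedness argument. For statement~1, the paper also uses Proposition~\ref{classification} to get a half-plane or the whole plane, notes that it meets $I$, and reuses the argument of Lemma~\ref{two_diff}(\ref{two_dif_1}): two noncollinear directions in $I_+$ (or $I_-$) give slices of linearly growing length, hence a strictly monotone trace. This is exactly what you carry out by hand, and your explicit translate $\{w_1\}+\nonneg(v_1,v_2)$ makes precise a step the paper leaves implicit. The only slip is the wording that $H$ meets $I$ ``in an open set'': $H\cap I$ need not be open, but it has nonempty interior (take $u\in I_+$ off the boundary line of $H$), which is all you use.
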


\begin{proof}[Sketch of proof]
    This Lemma analyzes the remaining two cases. The complete proof is presented in \Cref{proofs}.
\end{proof}

Finally, we are ready to prove the main theorem.

\begin{proof}[Proof of \Cref{main}]
In fixed dimension, integer feasibility for rational polyhedra can be decided in polynomial time~\cite{Lenstra1983IntegerPW}.

By \Cref{onevarcycle}, cycle detection for a one-variable SLC reduces to checking for cycles of length at most two. Each such check can be expressed as an ILP instance of dimension four, as in Proposition~\ref{bounded_ILP}, and can therefore be performed in polynomial time.

Moreover, a Minkowski--Weyl decomposition of $\mathcal{R}$ can be computed in polynomial time~\cite{10.5555/1370949}. By Lemmas~\ref{two_diff}, \ref{one}, \ref{two_col}, and \ref{simple_case}, the remaining analysis consists of a constant number of two-dimensional ILP feasibility checks, obtained by intersecting $\rec(\mathcal{R})$ with the cones $I_+, I_-, \Delta_+$, and $\Delta_-$, and, in certain cases, by computing the height.

It remains to argue that the height checks can also be carried out in polynomial time.
If $\rec(\mathcal{R})=\nonneg(v,-v)$, then all vertical slices $\mathcal{R}_z$ are translates of one another; hence, the height is independent of $z$ and can be computed from any single slice.

If $\mathcal{R}=\conv(w_{1},\dots,w_{\ell})+\nonneg(v)$ with $v=(p,q)$, $\sign(p)=\sign(q)$, and $\gcd(p,q)=1$, then the height of the slice $\mathcal{R}_z$ varies monotonically with $z$: it increases with $z$ when $p,q>0$ and decreases with $z$ when $p,q<0$.

Since $\conv(w_{1},\dots,w_{\ell})$ is bounded, there exists $M\in\NN$ such that the absolute values of the coordinates of all points in $\conv(w_{1},\dots,w_{\ell})$ are bounded by $M$.

Assume $p,q>0$. Then for any $z_2\ge z_1$ we have
$\mathcal{R}_{z_1} + (z_2-z_1)\frac{q}{p}\ \subseteq\ \mathcal{R}_{z_2}.$
Moreover, if $z_1\ge M$, then $\mathcal{R}_{z_1} + (z_2-z_1)\frac{q}{p}\ =\ \mathcal{R}_{z_2}.$
Hence, it suffices to check the height at $z=M$.

If $p,q<0$, a symmetric argument shows that it suffices to check the height at $z=-M$.
\end{proof}

\begin{remark}
The algorithm in \Cref{main} can be used as a semi-algorithm without assuming that the Reachability Conjecture holds: one can repeat the case distinction from Lemmas~\ref{two_diff}, \ref{one}, \ref{two_col}, and \ref{simple_case}. However, the procedure fails to return an answer in Case~3 of Lemma~\ref{one} and Case~2 of Lemma~\ref{two_col}.
\end{remark}

\section{Proofs}\label{proofs}

\subsection{Cycles}


First, we state an auxiliary lemma.

\begin{lemma} \label{auxlemma}
    Let $a,\alpha,b$ be integers. If $sign(a-\alpha) = sign(b-\alpha) \neq 0 $, then $$(\frac{ba-\alpha^2}{a+b-2\alpha}, \frac{ba-\alpha^2}{a+b-2\alpha}) \in \conv\Bigl((a,\alpha), (\alpha,b)\Bigr).$$ 
\end{lemma}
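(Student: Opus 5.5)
The idea is to intersect the segment from $(a,\alpha)$ to $(\alpha,b)$ with the diagonal $x_1=x_2$ and check that the intersection point is the claimed one and has parameter in $[0,1]$. Parametrize the segment as
\[
\lambda(a,\alpha)+(1-\lambda)(\alpha,b) = \bigl(\lambda a+(1-\lambda)\alpha,\ \lambda\alpha+(1-\lambda)b\bigr), \qquad \lambda\in[0,1].
\]
Equating the two coordinates gives the linear equation $\lambda(a-\alpha)+\alpha = \lambda(\alpha-b)+b$, that is,
\[
\lambda\,(a+b-2\alpha) = b-\alpha .
\]
The hypothesis $\sign(a-\alpha)=\sign(b-\alpha)\neq 0$ gives $a+b-2\alpha=(a-\alpha)+(b-\alpha)\neq 0$. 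So the equation has the unique solution
\[
\lambda^\ast=\frac{b-\alpha}{(a-\alpha)+(b-\alpha)} .
\]

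Next we show $\lambda^\ast\in[0,1]$. The numerator $b-\alpha$ and the denominator $(a-\alpha)+(b-\alpha)$ have the same nonzero sign, because both summands in the denominator share the sign of $b-\alpha$. Hence $\lambda^\ast>0$. Also $|b-\alpha|<|a-\alpha|+|b-\alpha|$, since $a-\alpha\ne0$, so $\lambda^\ast<1$. Thus $\lambda^\ast\in(0,1)$ and the corresponding point lies in $\conv\bigl((a,\alpha),(\alpha,b)\bigr)$.

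Finally we compute the common coordinate:
\[
\lambda^\ast(a-\alpha)+\alpha=\frac{(b-\alpha)(a-\alpha)+\alpha(a+b-2\alpha)}{a+b-2\alpha} .
\]
Expanding the numerator gives $ab-\alpha a-\alpha b+\alpha^2+\alpha a+\alpha b-2\alpha^2=ab-\alpha^2$. This matches the claimed point $\bigl(\tfrac{ab-\alpha^2}{a+b-2\alpha},\tfrac{ab-\alpha^2}{a+b-2\alpha}\bigr)$.

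The only delicate point is the sign condition, which is used both for the nonvanishing of the denominator and for placing $\lambda^\ast$ in the unit interval. The rest is routine algebra, and integrality of $a,\alpha,b$ is not needed.
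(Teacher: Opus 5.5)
Your proof is correct and follows the paper's approach. The paper also takes the weight $\frac{b-\alpha}{a+b-2\alpha}$ on $(a,\alpha)$, uses the sign hypothesis to place it in $(0,1)$, and expands to get $\frac{ba-\alpha^2}{a+b-2\alpha}$; the only difference is that you derive this weight by intersecting the segment with the diagonal, so you need to expand just one coordinate where the paper expands both.
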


\begin{proof}
    Define $x = \frac{b-\alpha}{a+b-2\alpha} $. Note that $0<x<1$ because $sign(a-\alpha) = sign(b-\alpha)\neq 0$.

    Hence we obtain
    $$x(a,\alpha) + (1-x)(\alpha,b) \in \conv\Bigl((a,\alpha), (\alpha,b)\Bigr).$$
    Let us rewrite the expression.

    \[x(a,\alpha) + (1-x)(\alpha,b) = (\frac{(b-\alpha)a + (a-\alpha)\alpha}{b+a-2\alpha},\frac{(b-\alpha)\alpha + (a-\alpha)b}{b+a-2\alpha}) = \] \[ = (\frac{ba-\alpha^2}{b+a-2\alpha},\frac{ba-\alpha^2}{b+a-2\alpha}).\]
    
\end{proof}

\begin{proof}[Proof of \Cref{onevarcycle}]

    Suppose $\mathcal{R}$ has a cycle $C = \{s_1,\dots,s_m\}$, where all $s_i$ are different states (otherwise we can take a smaller cycle) and $m>2$. Then for every $1\leq i < m$: $(s_i,s_{i+1}) \in \mathcal{R}$ and $(s_m,s_1)\in \mathcal{R}$, so
    $$(\frac{\sum_{i=1}^m s_i}{m},\frac{\sum_{i=1}^m s_i}{m})\in \mathcal{R}.$$Denote $z := \frac{\sum_{i=1}^m s_i}{m}$. Consider the translated relation
$\mathcal{R}' := \{(x-\lfloor z\rfloor,y-\lfloor z\rfloor)\mid (x,y)\in \mathcal{R}\}$ and translated states $s_i' := s_i-\lfloor z\rfloor$.
Then $\mathcal{R}'$ has a cycle $(s_1',\dots,s_m')$, and the corresponding average satisfies $0\le z-\lfloor z\rfloor<1$.
Thus, without loss of generality, we may assume $0\le z<1$. If $z = 0$, then there is a cycle of length $1$. Suppose this is not the case.

    Let $\alpha$ denote the state of $C$ with the largest absolute value, and let $a$ and $b$ be the neighboring states of $\alpha$ in $C$. Hence, $(a,\alpha)$ and $(\alpha,b)$ are in $\mathcal{R}$. Since $|a|,|b|\leq|\alpha|$ and all $s_i$ are different, we have $sign(a-\alpha) = sign(b-\alpha)\neq 0$. Now we can apply Lemma \ref{auxlemma} and obtain
    $$(\frac{ba-\alpha^2}{a+b-2\alpha}, \frac{ba-\alpha^2}{a+b-2\alpha}) \in \conv\Bigl((a,\alpha), (\alpha,b)\Bigr)  \subseteq  \mathcal{R}.$$
    Denote $\beta := \frac{ba-\alpha^2}{a+b-2\alpha}$ and consider the following cases.

    \smallskip
\noindent\textbf{Case 1: $ \beta \le 0$ or $1\le \beta$ .} In this case, since $0\leq z < 1$, either $(0,0)$ or $(1,1)$ can be expressed as a convex combination of $(z,z)$ and $(\beta,\beta)$, yielding a cycle of length one.

    \smallskip
\noindent\textbf{Case 2: $\alpha\le 0 $.} 
Note that $\alpha \le a,b \le -\alpha$ and $a,b$ and $\alpha$ are distinct. Thus, $a+b-2\alpha > 0$, and $ba-\alpha^2 < 0$, which implies $\beta < 0$. Hence, the statement follows from Case 1.

\smallskip
\noindent\textbf{Case 3: $0<\alpha$ and $0<\beta < 1$.}
    Since  $2\alpha > a+b$, rewriting $\beta<1$, we obtain
    $$a+b-2\alpha < ab-\alpha^2, $$
    $$\alpha^2 - 2\alpha < ab-a-b,$$
    $$(\alpha-1)^2 < (a-1)(b-1).$$
    If, for some $i$, $s_i = -\alpha$, then we can take $-\alpha$ instead of $\alpha$ and the statement follows from Case 2.
    Thus, we can assume that for every $i$ we have $-\alpha <s_i$. Therefore, 
    $$|a|,|b| < \alpha,$$ 
    $$-\alpha < a,b < \alpha, $$ 
    $$-\alpha - 1 < a-1,b-1 < \alpha - 1.$$
    Hence inequality $(\alpha-1)^2 < (a-1)(b-1)$ can be satisfied only if $a=-\alpha+1$ and $b=-\alpha+2$ or $a=-\alpha+2$ and $b=-\alpha+1$. Without loss of generality, assume the first case.

    Then $a$ is the smallest value in $C$. Let $c$ denote the predecessor of $a$ in $C$.

     All states in $C$ are different, and the length of $C$ is at least $3$, hence $-\alpha +2 \leq c<\alpha$. Since $sign(\alpha - a) = sign(c - a) = 1$, then we can apply Lemma \ref{auxlemma} to get

    $$(\frac{\alpha c - a^2}{c+\alpha - 2a},\frac{\alpha c - a^2}{c+\alpha - 2a}) \in \conv\Bigl((a,\alpha), (c,a)\Bigr) \subseteq \mathcal{R},$$

    $$(\frac{\alpha c - (\alpha-1)^2}{c+3\alpha - 2},\frac{\alpha c - (\alpha-1)^2}{c+3\alpha - 2}) \in \mathcal{R}.$$
    If $c<\alpha - 1$, then $\frac{\alpha c - (\alpha-1)^2}{c+3\alpha - 2} \leq 0$, hence $(0,0) \in \conv\Bigl((\frac{\alpha c - (\alpha-1)^2}{c+3\alpha - 2},\frac{\alpha c - (\alpha-1)^2}{c+3\alpha - 2}),(z,z)\Bigr) \subseteq \mathcal{R}$.

    It remains to prove the statement for $c= \alpha - 1$.
    Since $(\alpha-1, 1-\alpha)$ and $(\alpha, 2-\alpha)$ are in $\mathcal{R}$ we get that
    $$(\alpha - \frac{1}{2},\frac{3}{2} - \alpha) \in \mathcal{R}$$
    It is easy to see that points $(1,0)$ and $(0,1)$ lie on the interval
    $$[ (\alpha - \frac{1}{2},\frac{3}{2} - \alpha); (1-\alpha,\alpha)] \subseteq \mathcal{R},$$
    because $(\alpha - \frac{1}{2},\frac{3}{2} - \alpha) = (\frac{1}{2},\frac{1}{2}) + (\alpha-1, 1-\alpha)$ and $(1-\alpha,\alpha) = (\frac{1}{2},\frac{1}{2}) + (\frac{1}{2}-\alpha,\alpha - \frac{1}{2})$.
    Hence, $\mathcal{R}$ has a cycle of length 2.
    
\end{proof}

\subsection{Generalized Collatz Sequences}


\begin{proof} [Proof of \Cref{collatz_lemma}]
    Assume the contrary, there exist $T$, $n$ and $i$ such that $T^k(n)\equiv i\text{ (mod } d)$ for every $k\geq 0 $. Hence, \[T^{k+1}(n) = \frac{m_i T^{k}(n) - r_i}{d}.\] We prove by induction on $k$, that for every $k\geq 1$
    \[T^k(n) = \frac{nm_i^k - r_i\sum_{j=0}^{k-1}m_i^jd^{k-1-j}}{d^k}.\]
    For $k=1$ we get $T(n)=\frac{nm_i-r_i}{d}$. Assuming the formula for $k$ let us prove it for $k+1$,
    \[T^{k+1}(n) = \frac{m_iT^k(n)-r_i}{d} = \frac{m_i(nm_i^k - r_i\sum_{j=0}^{k-1}m_i^jd^{k-1-j}) - r_i d^k}{d^{k+1}} = \] \[= \frac{nm_i^{k+1} - r_i \sum_{j=0}^{k}m_i^jd^{k-j}
}{d^{k+1}}.
    \]
    Hence, for every $k\ge 1$,
    \[nm_i^k - r_i\sum_{j=0}^{k-1}m_i^jd^{k-1-j} \equiv 0 \text{ (mod }d^k).\]
    Multiplying it by $m_i-d$ yields
    \[(m_i-d)(nm_i^k - r_i\sum_{j=0}^{k-1}m_i^jd^{k-1-j}) = nm_i^{k+1} - ndm_i^k - r_i(m_i^k -d^k) \equiv 0 \text{ (mod }d^k). \]
    Since $d^k \equiv 0 \text{ (mod }d^k)$ and $\gcd(m_i,d) = 1$, we obtain for every $k$ that
    \[nm_i - nd -r_i \equiv 0 \text{ (mod }d^k).\]
    As this holds for all $k$ and $d>1$, it follows that $nm_i - nd -r_i=0$. Thus $n = \frac{nm_i-r_i}{d}$. Therefore $T^k(n) =n$ for all $k$, contradicting unboundedness.
\end{proof}



\begin{proof} [Proof of \Cref{reduction}]
    Recall that $T$ has the form
    $$T(x) = 
\frac{m x- (a+i)}{d}, \ \ \ \ \text{if }  mx\equiv a+i \ (\text{mod } d)$$
Define 
\[\mathcal{R}^+ := \{(x,x')\in \RR^2\mid mx-a-d+1 \leq dx' \leq mx-a-1 \text{ and } 0<x < x' \}\]
Note that,
$$\mathcal{R}^+_\ZZ = \{(n,T(n)) \in \ZZ^2\mid 0<n<T(n) \text{ and }mn \not\equiv a \ (\text{mod }d )\}.$$

Therefore every infinite trace of $\mathcal{R}^+$ is self-avoiding and corresponds to the unbounded sequence $\{mT^k(n) \}_{k\in \NN}$ with $mT^k(n)\not\equiv a \ (\text{mod }d )$, and $T^k(n) > 0$, for every $k\ge 0$. 

For the other direction, if there exists an infinite unbounded sequence $\{T^k(n) \}_{k\in \NN}$ of positive values, then, since $m>d$, from some step it takes values large enough to have $T^k(n) < T^{k+1}(n)$. It implies the existence of a self-avoiding trace of $\mathcal{R}^+$.

The proof for 
\[\mathcal{R}^- := \{(x,x')\in \RR^2\mid mx-a-d+1 \leq dx' \leq mx-a-1 \text{ and } 0> x > x' \}\]
is symmetric.
\end{proof}

\subsection{Self-avoiding Traces}


Let us formulate an auxiliary notation and an auxiliary lemma. 
 \[C_+ = \{ (x_1,x_2)\in \RR^2 \mid x_1 \ge 0 \ge x_2\}\ \ \ \ \ \ \ \ \ \ C_- = \{ (x_1,x_2)\in \RR^2 \mid x_2\ge 0 \ge x_1\}\]

\begin{lemma}\label{auxprop}
    Let $v_1,v_2, w,u \in \RR^2$ be nonzero vectors with $v_1\nparallel v_2$, $w\in C_+$, $u\in C_-$  and $w,u\in \nonneg(v_1,v_2)$. Then $\nonneg(v_1,v_2)\cap \Delta \neq \emptyset$ and $\nonneg(v_1,v_2)\cap I \neq \emptyset$.
\end{lemma}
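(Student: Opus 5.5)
Write $K := \nonneg(v_1,v_2)$. Since $v_1\nparallel v_2$, Proposition~\ref{classification} says $K$ is a pointed cone contained in no line, and it is the union of the closed angular sector swept counterclockwise from one generator to the other. This sector has opening angle strictly less than $\pi$. The plan is to locate $w$ and $u$ in angular terms and show that the sector between them must sweep across a diagonal ray and an adjacent open wedge.

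\textbf{Angular positions.} Parametrize directions by their angle. $C_+$ is the closed fourth quadrant, with angles in $[-\pi/2,0]$. $C_-$ is the closed second quadrant, with angles in $[\pi/2,\pi]$. Hence $w$ and $u$ are separated by an angle of at least $\pi/2$. Moreover, $w$ and $u$ cannot be antiparallel. If they were, $K$ would contain the line $\RR w$, contradicting pointedness (part 2 of Proposition~\ref{classification}).

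\textbf{The sector between $w$ and $u$.} Because $K$ is convex and pointed, the cone $\nonneg(w,u)$ lies in $K$. This cone is the sector of opening angle less than $\pi$ bounded by $w$ and $u$. There are two possible arcs from $w$ to $u$. The first goes counterclockwise through the first quadrant, passing the angle $\pi/4$ of $\Delta_+$. The second goes clockwise through the third quadrant, passing the angle $-3\pi/4$ of $\Delta_-$. The two arcs have lengths summing to $2\pi$, and neither equals $\pi$. So exactly one arc is shorter than $\pi$, and that arc is the sector $\nonneg(w,u)$.

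\textbf{Conclusion.} In the first case, $\Delta_+\subseteq\nonneg(w,u)$, so $K\cap\Delta\ne\emptyset$. To show $K\cap I\neq\emptyset$, observe that $I_+$ is the open wedge of angles $(\pi/4,\pi/2)$. It is also swept by this arc, since the arc starts at an angle $\le 0$ and ends at an angle $\ge \pi/2$. Concretely, $(1,2)$ is a positive combination of $w$ and $u$. The second case is symmetric, using $I_-$, the angles in $(-3\pi/4,-\pi/2)$, and for instance the point $(-1,-2)$.

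\textbf{Main obstacle.} The delicate step is making the "arc" argument rigorous without hand-waving about angles. Degenerate placements need care, such as $w$ lying on an axis (e.g.\ $w=(1,0)$ or $(0,-1)$). To handle these algebraically, I would write $w=(w_1,w_2)$ with $w_1\ge0\ge w_2$ and $u=(u_1,u_2)$ with $u_2\ge0\ge u_1$. I would then consider $\det(w,u)=w_1u_2-w_2u_1$, which is nonzero since $w\nparallel u$.

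If $\det(w,u)>0$, the sector is the counterclockwise one. Solving $\lambda w+\mu u=(1,2)$ gives $\lambda=(u_2-2u_1)/\det(w,u)\ge0$ and $\mu=(2w_1-w_2)/\det(w,u)\ge0$, with both numerators positive because $w,u\neq0$. The same computation with $(1,1)$ places a point of $\Delta_+$ in $K$.

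If $\det(w,u)<0$, the same computation with $(-1,-1)$ and $(-1,-2)$ gives nonnegative coefficients, placing points of $\Delta_-$ and $I_-$ in $K$.
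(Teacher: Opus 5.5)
Your proposal is correct. Its first step matches the paper: both use Proposition~\ref{classification} to rule out $w\parallel u$. After that the routes differ.

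The paper writes down the convex combination $\alpha w+(1-\alpha)u$ with $\alpha=\frac{u_2-u_1}{w_1-w_2+u_2-u_1}$ and checks that it is a nonzero diagonal point. It then gets a point of $I$ topologically: that diagonal point lies in the interior of $\nonneg(v_1,v_2)$, so a neighborhood of it lies in the cone, and that neighborhood meets $I$ because $\Delta\subseteq\cl(I)$.

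You instead apply Cramer's rule to the fixed targets $(1,1),(1,2)$ or $(-1,-1),(-1,-2)$, splitting on $\sign\det(w,u)$. The sign constraints defining $C_+$ and $C_-$ force every numerator to be strictly of the same sign as $\det(w,u)$.

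For $\Delta$, your computation is essentially the paper's. The paper's point equals $\frac{\det(w,u)}{w_1-w_2+u_2-u_1}(1,1)$ with a positive denominator, so its sign is exactly your case split. For $I$, your argument is more elementary:
\begin{itemize}
\item it produces explicit witnesses;
\item it tells you which of $I_+$ or $I_-$ is hit;
\item it avoids the interior claim. The paper supports that claim only by asserting that the diagonal point is not collinear with $v_1,v_2$; the underlying reason is that $\alpha\in(0,1)$ and $w\nparallel u$, so the point is interior to $\nonneg(w,u)$.
\end{itemize}
In exchange, the paper's version needs no case distinction on orientation.

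Two wording fixes. First, $\nonneg(w,u)\subseteq K$ uses only that $K$ is a convex cone, not that it is pointed. Second, Proposition~\ref{classification} says $K$ \emph{contains} no line, not that it is contained in none. Your angular discussion is heuristic, but the determinant computation is a complete proof on its own.
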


 \begin{proof}
    If $w \parallel u$, then $nonneg(w,u)$ is a line contained in the closed cone $nonneg(v_1,v_2)$, contradicting Proposition \ref{classification}, since $v_1\nparallel v_2$. Thus $w\nparallel u$. 

    Let us prove that there exists $\alpha\in\RR_+$, such that $\alpha w + (1-\alpha) u \in \cl(\Delta)$. Denote $w= (w_1,w_2)$ and $u = (u_1, u_2)$. Define $\alpha := \frac{u_2-u_1}{w_1 - w_2 + u_2 - u_1}$. Since $w\in C_+$ and $u\in C_-$, we obtain $w_1-w_2 \geq 0$, $u_2-u_1 \geq 0$ and since $w\nparallel u$ they cannot be both equal to $0$. Hence denominator of $\alpha$ is nonzero and $\alpha\geq 0$.  A direct computation gives 
    \[\alpha w + (1-\alpha) u     = w \frac{u_2-u_1}{w_1 - w_2 + u_2 - u_1} + u \frac{w_1-w_2}{w_1 - w_2 + u_2 - u_1} = \] \[(\frac{w_1(u_2-u_1)+ u_1 (w_1-w_2)}{w_1 - w_2 + u_2 - u_1}, \frac{w_2(u_2-u_1)+ u_2 (w_1-w_2)}{w_1 - w_2 + u_2 - u_1}) = \]
    \[ = (\frac{w_1u_2 - u_1w_2}{w_1 - w_2 + u_2 - u_1}, \frac{w_1u_2 - u_1w_2}{w_1 - w_2 + u_2 - u_1})\in \cl(\Delta).\]
    
    Since $w\nparallel u$ we have $\alpha w + (1-\alpha) u \neq (0,0)$, thus $\alpha w + (1-\alpha) u\in \Delta$. Recall that $u,w \in nonneg(v_1,v_2)$, hence $\alpha w + (1-\alpha) u \in \Delta$.

    Since $\alpha w + (1-\alpha) u \nparallel v_1$, and $\alpha w + (1-\alpha) u \nparallel v_2$, we get that $\alpha w + (1-\alpha) u $ is in the interior of $\nonneg(v_1,v_2)$, thus, some open neighborhood of $\alpha w + (1-\alpha) u $ is contained in $\nonneg(v_1,v_2)$, but any such neighborhood intersects $I$, since $\Delta \subseteq \cl(I)$. Hence, the statement follows.
\end{proof}


    
    

    

     

\begin{proof} [Proof of \Cref{two_diff}]
    \begin{enumerate}
        \item Without loss of generality assume $\nonneg(v_1,v_2) \cap I_+ \neq \emptyset$. The case when $\nonneg(v_1,v_2) \cap I_- \neq \emptyset$ follows from the considered one applied to $-\mathcal{R}$. 
        
        The set $\nonneg( v_1,v_2 ) \cap I_+$ is a (possibly nonclosed) convex cone as the intersection of two convex cones. Let us prove that it contains at least two noncollinear vectors. Indeed, for the contradiction, suppose there exists a vector $v$, such that every vector in $\nonneg(v_1,v_2) \cap I_+$ is collinear to  $v$. Thus, $ \nonneg(v_1,v_2) \cap I_+ \subseteq \nonneg(v)$. 
        

        Informally, $\nonneg(v_1,v_2)$ contains "half of a neighborhood of $v$".
        Formally, for angles $-\pi < \epsilon_1 < \epsilon_2 < \pi$, let $rot_v(\epsilon_1,\epsilon_2)$ be the set of vectors $v_{\mathrm{rot}}\in\RR^2$ such that there exists an angle $\theta$ with $\epsilon_1<\theta<\epsilon_2$ and
        $v_{\mathrm{rot}}$ is the image of $v$ under rotation by $\theta$ around the origin.
        Equivalently, $rot_v(\epsilon_1,\epsilon_2)=\{R_\theta v \mid \epsilon_1<\theta<\epsilon_2\}$, where $R_\theta$ is the rotation matrix.

Since $v\in\nonneg(v_1,v_2)$ and $v_1\nparallel v_2$, there exists $\epsilon\in(0,\pi)$ such that either
$rot_v(0,\epsilon)\subseteq \nonneg(v_1,v_2)$ or $rot_v(-\epsilon,0)\subseteq \nonneg(v_1,v_2)$.

Moreover, since $I_+$ is open and $v\in I_+$, we can choose $\epsilon>0$ small enough such that
$rot_v(-\epsilon,\epsilon)\subseteq I_+$.
Consequently, one of the sets $rot_v(0,\epsilon)$ or $rot_v(-\epsilon,0)$ is contained in
$\nonneg(v_1,v_2)\cap I_+$, contradicting the assumption that
$\nonneg(v_1,v_2)\cap I_+ \subseteq \nonneg(v)$.
        

        
        Hence, there are two non-collinear vectors $w,u\in \nonneg( v_1,v_2 ) \cap I_+$. Then \[|( \nonneg( v_1,v_2 ) \cap I_+)_{n}\cap\ZZ | \ge |\nonneg(w,u)_{n} \cap \ZZ|\ge n|\frac{w}{|w|} - \frac{u}{|u|}| \xrightarrow[n\rightarrow\infty]{} \infty.\] 
        Hence, for sufficiently large $n$, there always exists a transition from $n$ to a larger value, thus the loop has a self-avoiding trace. 

        \item Suppose there is a self-avoiding trace $\{ s_i\}_{i\in \NN}$.
        If there are infinitely many negative and infinitely many positive states, then infinitely many indices $i$ satisfy $s_i<0<s_{i+1}$ and infinitely many satisfy $s_i>0>s_{i+1}$. Thus, sets $\mathcal{R} \cap C_+$ and $\mathcal{R} \cap C_-$ are unbounded. In that case, their recession cones are non-empty, thus, by Proposition \ref{infcap}, there are non-zero vectors $w \in \rec(\mathcal{R}) \cap  C_+ $ and $u \in \rec(\mathcal{R}) \cap C_-$. Hence, by Lemma \ref{auxprop}, $\rec(\mathcal{R}) \cap \Delta \neq \emptyset$. Contradiction.

        Suppose instead there are only finitely many negative $s_i$.  The case when there are only finitely many positive $s_i$ follows from the considered one for $-\mathcal{R}$. 
        
        Then there are infinitely many $i$ such that $s_{i+1} > s_i > 0$. 
        By Theorem \ref{Minkowski-Weyl}, $\mathcal{R} = \conv(w_1,\dots, w_l) + \rec(\mathcal{R})$. Hence, for every $i$, there are $w^i\in \conv(w_1,\dots, w_l)$ and $v^i\in \rec(\mathcal{R})$, such that  $(s_i,s_{i+1}) = w^i + v^i$. Consider the normalized vectors $\frac{v^i}{\lVert v^i \rVert}$ along the infinite subsequence with $s_{i+1} > s_i > 0 $. Since it is an infinite sequence on a compact sphere, it must have a converging subsequence $\frac{v^{i_j}}{\lVert v^{i_j} \rVert}$. Since  for every $j$ holds $\frac{v^{i_j}}{\lVert v^{i_j} \rVert}\in \rec(\mathcal{R})$ this sequence must converge to $v\in \rec(\mathcal{R})$ and $\lVert v \rVert=1$. Let us prove that $v\in \cl(I_+)$. 
        
        Assume that $v\notin \cl(I_+)$. Since $\frac{v^{i_j}}{\lVert v^{i_j} \rVert }$ converges to $v$, for every $\epsilon$ there exists $N$, such that for every $j>N$, $\lVert\frac{v^{i_j}}{\lVert v^{i_j} \rVert} - v \rVert \leq \epsilon$. Let us take $\epsilon < \dist(v, \cl(I_+))$. Thus, $\nonneg(v'\in \RR^2 \mid \lVert v-v'\rVert \leq \epsilon) \cap \cl(I_+) = \emptyset$, hence by Proposition \ref{infcap} $\rec(\nonneg(v'\in \RR^2 \mid \lVert v-v'\rVert \leq \epsilon)) \cap \rec(\conv(-w_1,\dots,-w_l) + I_+ ) = \emptyset$, which implies that $\nonneg(v'\in \RR^2 \mid \lVert v-v'\rVert \leq \epsilon) \cap (\conv(-w_1,\dots,-w_l) + I_+ )$ is bounded by \Cref{Minkowski-Weyl}. For every $j>N$, $(s_{i_j}, s_{i_j +1}) - w^{i_j} = v^{i_j}  \in \nonneg(v'\in \RR^2 \mid \lVert v-v'\rVert \leq \epsilon)$, but $(s_{i_j},s_{i_j+1})\in I_+$, hence $v^{i_j}\in \conv(-w_1,\dots,-w_l) + I_+ $. Since the sequence $s_{i_j}$ is unbounded and the set $\conv(w_1,\dots,w_l)$ is bounded, the sequence $v^{i_j}$ is also unbounded, but it is contained in a bounded set. Thus $v\in \cl(I_+)$ by contradiction.

        Since $v\in \cl(I_+)\cap \rec(\mathcal{R})$ and $\rec(\mathcal{R}) \cap (I_+\cup \Delta_+) = \emptyset$ we obtain $v\in \{ (0,x_2)\in \RR^2 \mid x_2 > 0\}$, hence $v=(0,1)$. There exists $N$, such that for every $j>N$ holds $|\frac{v^{i_j}}{\lVert v^{i_j} \rVert} - (0,1)| < 0.1$, hence $\frac{v^{i_j}}{\lVert v^{i_j} \rVert} \in \{(x_1,x_2) \mid x_1\leq 0 \}$, because otherwise $\frac{v^{i_j}}{\lVert v^{i_j} \rVert} \in I_+$, which is impossible, since $\frac{v^{i_j}}{\lVert v^{i_j} \rVert} \in \rec(\mathcal{R})$. Since the set $\conv(w_1,\dots,w_l)$ is bounded, the supremum $\sup\{ x_1\in\RR \mid \exists x_2\in \RR: (x_1,x_2)\in \conv(w_1,\dots,w_l)\} $ is finite, denote it as $K$. Since $s_{i_j} = w^{i_j} + v^{i_j}$ from the proven above, for every $j>N$, $0< s_{i_j} < K$. That contradicts the distinctness of $s_i$ and the assumption that there are only finitely many negative $s_i$.

        \item If $\rec(\mathcal{R}) \cap \Delta_+ \neq \emptyset$, then $\Delta_+ \subseteq \rec(\mathcal{R}) $. If $(a,b) \in \mathcal{R}_{\ZZ} \cap I_+$, then $(a,b) + \Delta_+ \subseteq \mathcal{R}$, hence for any $n\geq a$ it is always possible to make a transition to $n-a+b$. Thus, the loop has a self-avoiding trace.
        
        \item Suppose there is a self-avoiding trace $\{ s_i \}_{i\in \NN}$. If there are infinitely many negative and infinitely many positive states, then infinitely many indices $i$ satisfy $s_i<0<s_{i+1}$ and infinitely many satisfy $s_i>0>s_{i+1}$. Thus, sets $\mathcal{R} \cap C_+$ and $\mathcal{R} \cap C_-$ are unbounded. In that case, their recession cones are non-empty, thus, by Proposition \ref{infcap}, there are non-zero vectors $w \in \rec(\mathcal{R}) \cap  C_+ $ and $u \in \rec(\mathcal{R}) \cap C_-$. Hence, by Lemma \ref{auxprop}, $\rec(\mathcal{R}) \cap I \neq \emptyset$. Contradiction.

        There cannot be only finitely many negative $s_i$, due to the argument identical to the argument in statement~\ref{two_dif_2} of this lemma. 
        
        If it contains only finitely many positive terms, then there is some index with $0>s_i>s_{i+1}$, contradicting $\mathcal{R}_{\ZZ}\cap I_-=\emptyset$.

        \item This statement follows from the statement \ref{two_dif_3} applied to $-\mathcal{R}$.

        \item This statement follows from the statement \ref{two_dif_4} applied to $-\mathcal{R}$.
    \end{enumerate}
\end{proof}

\begin{proof} [Proof of \Cref{one}]
    \begin{enumerate}
        \item Assume $0 < p < q$. The case when $q < p < 0$ follows from the considered one applied to $-\mathcal{R}$. 

        If $\height_{|p|}( \mathcal{R}) \geq p$, then there exists $z\in \ZZ$, such that \[|\mathcal{R}_{z}\cap (\frac{1}{p}\ZZ)  | \geq p.\]

        For any $z'\ge z$, using $(p,q)\in \rec(\mathcal{R})$ we can shift by $(z'-z)/p$ along $(p,q)$, which maps $x_1=z$ to $x_1=z'$ and adds $\frac{q}{p}(z'-z)$ to the second coordinate. Hence

        \[|\mathcal{R}_{z'}\cap (\frac{1}{p}\ZZ)  | \geq |\mathcal{R}_{z}\cap (\frac{1}{p}\ZZ) + \frac{q}{p}(z'-z) | = |\mathcal{R}_{z}\cap (\frac{1}{p}\ZZ)  | \geq p.\]

        Since $|\mathcal{R}_{z'}|$ is convex $|\mathcal{R}_{z'}\cap (\frac{1}{p}\ZZ)| \geq p$ implies that $\mathcal{R}_{z'}$ contains at least $p$ consecutive points from $\frac{1}{p}\ZZ$, hence at least one of them is an integer. Thus $|\mathcal{R}_{z'}\cap \ZZ  | \geq 1$. 

        Since $0<p < q$ there exists $N \in \NN$, such that for every $z' \geq N$, $\mathcal{R}_{z'} \subseteq I_+$. Hence, $\mathcal{R}_{z'}\cap \ZZ \subseteq [z'+1; \infty)$. Thus, for every $z' > \max(N,z)$, it is possible to make a transition to a larger value. Hence, the loop has a self-avoiding trace.

        \item Assume $p< 0 < q$. The case when $q  < 0 <p$ follows from the considered one applied to $-\mathcal{R}$. In that case,
        \[\rec(\mathcal{R}) \subseteq \RR_-\times\RR_+ \cup \{ (0,0)\}.\]By Proposition \ref{infcap} the set $\mathcal{R} \cap (\RR^2 \setminus (\RR_- \times \RR_+))$ is bounded. Therefore, any self-avoiding trace would eventually have to make transitions only from $\RR_{-}$ to $\RR_{+}$, which is impossible. Hence, no self-avoiding trace exists.

        \item Assume $0 < p , q$. The case when $p,q < 0$ follows from the considered one applied to $-\mathcal{R}$. The case when $\sign(p)\neq \sign(q)$ follows from the statement \ref{one_2} of this Lemma. 

        Denote $h = \height_{|p|}( \mathcal{R})$. There exists $z\in \ZZ$, such that $|\mathcal{R}_{z}\cap (\frac{1}{p}\ZZ)  |=h.$ Since $\mathcal{R}$ is convex the points of $\mathcal{R}_{z}\cap (\frac{1}{p}\ZZ)$ are of the form $\{ \frac{qz - a - i}{p} \}_{1 \leq i \leq h}$ for some $a\in \ZZ$.  

        Let us prove that for all $z'\in \ZZ$,
        \begin{equation}\label{incl} \mathcal{R}_{z'} \cap (\frac{1}{p}\ZZ) \subseteq \mathcal{R}_{z} \cap (\frac{1}{p}\ZZ) + \frac{q}{p}(z'-z).
        \end{equation}

        For $z' > z$ since $(p,q)\in \rec(\mathcal{R})$ and $z'-z>0$ the following holds 
        \[\mathcal{R}_{z} \cap (\frac{1}{p}\ZZ) + \frac{q}{p}(z'-z) \subseteq \mathcal{R}_{z'} \cap (\frac{1}{p}\ZZ),\]
        but since $|\mathcal{R}_{z'} \cap (\frac{1}{p}\ZZ)| \leq h$ we get (\ref{incl}). 
        
        Suppose that $z' < z$, then for any $r \in \mathcal{R}_{z'}\cap (\frac{1}{p}\ZZ)$, since $(p,q)\in \rec(\mathcal{R})$ and $z-z'>0$, we obtain $r+\frac{q}{p}(z-z')\in \mathcal{R}_{z}\cap (\frac{1}{p}\ZZ)$. Thus, we get (\ref{incl}).
        
        Hence $\mathcal{R}_\ZZ \subseteq \{ (x,y) \in  \ZZ^2 \mid y = \frac{qx-a-i}{p}, \ 1\leq i\leq h  \}.$
        
        Non-existence of a self-avoiding trace follows from Conjecture \ref{simpleconj}, since $h<p$. 

        \item The proof is analogous to the proof of the statement \ref{one_3}, but at the end, instead of Conjecture \ref{simpleconj}, Proposition \ref{collatz_lemma} needs to be applied.

        \item In that case $\mathcal{R}\cap \ZZ^2 = \emptyset$, so the claim is immediate.

        \item If $|p| > |q|$, then$$\rec(\mathcal{R})\subseteq \{ (x_1,x_2)\in \RR^2 \mid |x_1| > |x_2| \text{ or } x_1=x_2=0\}.$$Hence, the set $\mathcal{R} \cap \{ (x,y)\in \RR^2 \mid |x| \leq |y|\}$ is bounded. Assume there is a self-avoiding trace $\{ s_i\}_{i\in \NN}$. Then there are infinitely many $i$, such that $|s_i| < |s_{i+1}|$. Contradiction.
        
    \end{enumerate}

    If $p=q > 0$, then $\rec(\mathcal{R})= \Delta_+$. If $p=q < 0$, then $\rec(\mathcal{R}) = \Delta_-$. Hence, the proofs of statements \ref{one_6},\ref{one_7},\ref{one_8} and \ref{one_9} are analogous to the proofs of statement \ref{two_dif_3}, \ref{two_dif_4}, \ref{two_dif_5} and \ref{two_dif_6} of Lemma \ref{two_diff} .
\end{proof}

\begin{proof} [Proof of \Cref{two_col}]
    The proofs of \ref{two_col_1}, \ref{two_col_2}, \ref{two_col_3}, \ref{two_col_4}, \ref{two_col_5}, \ref{two_col_6}, and \ref{two_col_7} are analogous to the proofs of Lemma~\ref{one}, items \ref{one_1}, \ref{one_3}, \ref{one_4}, \ref{one_5}, \ref{one_6}, \ref{one_7}, and \ref{one_8}, respectively.

    Assume $-p=q$. Since $\gcd(p,q)= 1$ we have $p=1$, $q=-1$.  
    
    Denote $h=\height_1(\mathcal{R})$. Take $z\in \ZZ$, such that $|\mathcal{R}_{z}\cap \ZZ | = h $. For every $z'\in \ZZ$,

    \[\mathcal{R}_{z} \cap \ZZ + (z'-z) \subseteq \mathcal{R}_{z'} \cap \ZZ , \]
    and since $|\mathcal{R}_{z'} \cap \ZZ  | \leq h$, equality holds
    \[\mathcal{R}_{z} \cap \ZZ + (z'-z) = \mathcal{R}_{z'} \cap \ZZ . \]

    Denote $\mathcal{R}_{0}\cap \ZZ = [a;b]$, then $h = b-a + 1$. Thus $\mathcal{R}_\ZZ = \{ (x,y)\in \ZZ \mid a\leq y+x \leq b\}$. If $h = 1$, then for every trace $s_1,s_2,s_3$ we have $s_1 + s_2 = s_2 + s_3 = a$, hence $s_1=s_3$, the loop does not have a self-avoiding trace, thus statement \ref{two_col_9} is proved.
    
    If $h\geq 2$, then let us define a trace $\{ s_i\}_{i\in \NN}$ by \[
s_1 = 2  |a|, \qquad
s_{i+1} =
\begin{cases}
  a - s_i, & \text{if } i \text{ is odd}, \\[6pt]
  b - s_i, & \text{if } i \text{ is even}.
\end{cases}
\]

For every odd $i$ we get  $s_{i+2} = b-a + s_i > s_i$ and $s_{i+1} = a-s_i < a-s_1 < a-2|a| < 0$, hence this trace is self-avoiding. Thus, statement \ref{two_col_8} is proved.

The statement \ref{two_col_10}  follows from the fact, that if $p = 0$, then the set $\{ x\in \ZZ \mid \exists y\in \ZZ.  \ (x,y)\in \mathcal{R} \}$ is bounded, thus an infinite trace can visit only the finite number of different states.

\end{proof}

\begin{proof} [Proof of \Cref{simple_case}]
    \begin{enumerate}
        \item If $\rec(\mathcal{R}) = \nonneg ( v_1,v_2,v_3)$ and this cone cannot be generated by two vectors, then by Proposition \ref{classification} it is either a whole plane or a (closed) affine half-plane. In the former case, $\mathcal{R} = \RR^2$. For the latter case, note that any closed affine half-plane must intersect $I$, hence the argument of Lemma~\ref{two_diff}.\ref{two_dif_1}  applies and yields a self-avoiding trace.
        \item If $\rec(\mathcal{R}) = \{ 0 \}$, then $\mathcal{R}$ is bounded, hence any infinite trace can visit only a finite number of states and thus cannot be self-avoiding.
    \end{enumerate}
\end{proof}

\section{Conclusion}\label{conclusion}
In this paper, we have studied the termination problem for SLCs over $\mathbb{Z}$. We have observed that a loop has an infinite computation iff it admits either a cycle or a self-avoiding trace.

We first analyzed cyclic traces. For one-variable SLCs, we have proved that the existence of a cycle implies the existence of a cycle of length at most two. An open question is whether, for $n>1$, there exists a bound $K(n)$ such that any $n$-variable loop that has a cycle also has one of length at most $K(n)$.

Next, we have formulated the Reachability Conjecture for weak Collatz sequences -- a special case of a long-standing conjecture -- and proved it for modulus $d=2$. We have shown that a decision procedure for one-variable SLC termination would verify the Reachability Conjecture for specific mappings; for most mappings (in particular, all cases with $d>2$), the conjecture remains open.

Finally, assuming the Reachability Conjecture, we have established that termination of one-variable SLCs is decidable in polynomial time.



\bibliography{lipics-v2021-sample-article}

@article{Matthews1984,
author = {Matthews K., Watts A.},
journal = {Acta Arithmetica},
language = {eng},
number = {2},
pages = {167-175},
title = {A generalization of Hasse's generalization of the Syracuse algorithm},
url = {http://eudml.org/doc/205897},
volume = {43},
year = {1984},
}

@article{10.1145/2400676.2400679,
author = {Ben-Amram, Amir M. and Genaim, Samir and Masud, Abu Naser},
title = {On the Termination of Integer Loops},
year = {2012},
issue_date = {December 2012},
publisher = {Association for Computing Machinery},
address = {New York, NY, USA},
volume = {34},
number = {4},
issn = {0164-0925},
url = {https://doi.org/10.1145/2400676.2400679},
doi = {10.1145/2400676.2400679},
journal = {ACM Trans. Program. Lang. Syst.},
month = dec,
articleno = {16},
numpages = {24}
}

@misc{lagarias20213x1problemoverview,
      title={The 3x+1 Problem: An Overview}, 
      author={Jeffrey C. Lagarias},
      year={2021},
      eprint={2111.02635},
      archivePrefix={arXiv},
      primaryClass={math.NT},
      url={https://arxiv.org/abs/2111.02635}, 
}

@article{matthews2010generalized,
  title={Generalized 3x+ 1 mappings: markov chains and ergodic theory},
  author={Matthews, Keith R},
  journal={The ultimate challenge: The 3x},
  volume={1},
  pages={79--103},
  year={2010}
}

@inproceedings{Tiwari04:CAV,
	TITLE = {Termination of linear programs},
	AUTHOR = {Tiwari, A.},
	BOOKTITLE = {Computer-Aided Verification, CAV},
	EDITOR = "Alur, R. and Peled, D.",
	PAGES = {70--82},
	PUBLISHER = {Springer},
	SERIES = {LNCS},
	VOLUME = "3114",
	MONTH = jul,
	YEAR = 2004
}

@InProceedings{guilmant_et_al:LIPIcs.ICALP.2024.140,
  author =	{Guilmant, Quentin and Lefaucheux, Engel and Ouaknine, Jo\"{e}l and Worrell, James},
  title =	{{The 2-Dimensional Constraint Loop Problem Is Decidable}},
  booktitle =	{51st International Colloquium on Automata, Languages, and Programming (ICALP 2024)},
  pages =	{140:1--140:21},
  series =	{Leibniz International Proceedings in Informatics (LIPIcs)},
  ISBN =	{978-3-95977-322-5},
  ISSN =	{1868-8969},
  year =	{2024},
  volume =	{297},
  editor =	{Bringmann, Karl and Grohe, Martin and Puppis, Gabriele and Svensson, Ola},
  publisher =	{Schloss Dagstuhl -- Leibniz-Zentrum f{\"u}r Informatik},
  address =	{Dagstuhl, Germany},
  URL =		{https://drops.dagstuhl.de/entities/document/10.4230/LIPIcs.ICALP.2024.140},
  URN =		{urn:nbn:de:0030-drops-202831},
  doi =		{10.4230/LIPIcs.ICALP.2024.140}
}

@InProceedings{10.1007/11817963_34,
author="Braverman, Mark",
editor="Ball, Thomas
and Jones, Robert B.",
title="Termination of Integer Linear Programs",
booktitle="Computer Aided Verification",
year="2006",
publisher="Springer Berlin Heidelberg",
address="Berlin, Heidelberg",
pages="372--385",
isbn="978-3-540-37411-4"
}

@InProceedings{hosseini_et_al:LIPIcs.ICALP.2019.118,
  author =	{Hosseini, Mehran and Ouaknine, Jo\"{e}l and Worrell, James},
  title =	{{Termination of Linear Loops over the Integers}},
  booktitle =	{46th International Colloquium on Automata, Languages, and Programming (ICALP 2019)},
  pages =	{118:1--118:13},
  series =	{Leibniz International Proceedings in Informatics (LIPIcs)},
  ISBN =	{978-3-95977-109-2},
  ISSN =	{1868-8969},
  year =	{2019},
  volume =	{132},
  editor =	{Baier, Christel and Chatzigiannakis, Ioannis and Flocchini, Paola and Leonardi, Stefano},
  publisher =	{Schloss Dagstuhl -- Leibniz-Zentrum f{\"u}r Informatik},
  address =	{Dagstuhl, Germany},
  URL =		{https://drops.dagstuhl.de/entities/document/10.4230/LIPIcs.ICALP.2019.118},
  URN =		{urn:nbn:de:0030-drops-106940},
  doi =		{10.4230/LIPIcs.ICALP.2019.118}
}

@InProceedings{benamram:LIPIcs.STACS.2013.514,
  author =	{Ben-Amram, Amir M.},
  title =	{{Mortality of Iterated Piecewise Affine Functions over the Integers: Decidability and Complexity}},
  booktitle =	{30th International Symposium on Theoretical Aspects of Computer Science (STACS 2013)},
  pages =	{514--525},
  series =	{Leibniz International Proceedings in Informatics (LIPIcs)},
  ISBN =	{978-3-939897-50-7},
  ISSN =	{1868-8969},
  year =	{2013},
  volume =	{20},
  editor =	{Portier, Natacha and Wilke, Thomas},
  publisher =	{Schloss Dagstuhl -- Leibniz-Zentrum f{\"u}r Informatik},
  address =	{Dagstuhl, Germany},
  URL =		{https://drops.dagstuhl.de/entities/document/10.4230/LIPIcs.STACS.2013.514},
  URN =		{urn:nbn:de:0030-drops-39615},
  doi =		{10.4230/LIPIcs.STACS.2013.514}
}

@inproceedings{inproceedings,
author = {Podelski, Andreas and Rybalchenko, Andrey},
year = {2004},
month = {01},
pages = {239-251},
title = {A Complete Method for the Synthesis of Linear Ranking Functions},
booktitle    = {Verification, Model Checking, and Abstract Interpretation, 5th International
                  Conference, {VMCAI} 2004, Venice, Italy, January 11-13, 2004, Proceedings},
volume = {2937},
isbn = {978-3-540-20803-7},
doi = {10.1007/978-3-540-24622-0_20}
}

@misc{benamram2025terminationanalysislinearconstraintprograms,
      title={Termination Analysis of Linear-Constraint Programs}, 
      author={Amir M. Ben-Amram and Samir Genaim and Joël Ouaknine and James Worrell},
      year={2025},
      eprint={2509.06752},
      archivePrefix={arXiv},
      primaryClass={cs.PL},
      url={https://arxiv.org/abs/2509.06752}, 
}

@book{10.5555/17634,
author = {Schrijver, Alexander},
title = {Theory of linear and integer programming},
year = {1986},
isbn = {0471908541},
publisher = {John Wiley \& Sons, Inc.},
address = {USA}
}

@misc{telen2022introductiontoricgeometry,
      title={Introduction to Toric Geometry}, 
      author={Simon Telen},
      year={2022},
      eprint={2203.01690},
      archivePrefix={arXiv},
      primaryClass={math.AG},
      url={https://arxiv.org/abs/2203.01690}, 
}

@article{Lenstra1983IntegerPW,
  title={Integer Programming with a Fixed Number of Variables},
  author={Hendrik W. Lenstra},
  journal={Math. Oper. Res.},
  year={1983},
  volume={8},
  pages={538-548},
  url={https://api.semanticscholar.org/CorpusID:7868492}
}

@book{10.5555/1370949,
author = {Berg, Mark de and Cheong, Otfried and Kreveld, Marc van and Overmars, Mark},
title = {Computational Geometry: Algorithms and Applications},
year = {2008},
isbn = {3540779736},
publisher = {Springer-Verlag TELOS},
address = {Santa Clara, CA, USA},
edition = {3rd ed.}
}

@book{rockafellar2015convex,
  author       = {R. Tyrrell Rockafellar},
  title        = {Convex Analysis},
  series       = {Princeton Landmarks in Mathematics and Physics},
  publisher    = {Princeton University Press},
  year         = {1970},
  isbn         = {978-1-4008-7317-3},
  bibsource    = {dblp computer science bibliography, https://dblp.org}
}

@article{Matthews1985AMA,
  title={A Markov approach to the generalized Syracuse algorithm},
  author={Keith R. Matthews and A. M. Watts},
  journal={Acta Arithmetica},
  year={1985},
  volume={45},
  pages={29-42},
  url={https://api.semanticscholar.org/CorpusID:118600406}
}

@article{Leigh1986AMP,
  title={A Markov process underlying the generalized Syracuse algorithm},
  author={George M. Leigh},
  journal={Acta Arithmetica},
  year={1986},
  volume={46},
  pages={125-143},
  url={https://api.semanticscholar.org/CorpusID:117635633}
}

@article{Mller1978berHV,
  title={{\"U}ber Hasses Verallgemeinerung des Syracuse-Algorithmus (Kakutanis Problem)},
  author={H. M{\"o}ller},
  journal={Acta Arithmetica},
  year={1978},
  volume={34},
  pages={219-226},
  url={https://api.semanticscholar.org/CorpusID:117690327}
}

@article{Matthews1992SomeBM,
  title={Some Borel measures associated with the generalized Collatz mapping},
  author={Keith R. Matthews},
  journal={Colloquium Mathematicum},
  year={1992},
  volume={63},
  pages={191-202},
  url={https://api.semanticscholar.org/CorpusID:54607478}
}

@InProceedings{10.1007/978-3-319-02444-8_26,
author="Heizmann, Matthias
and Hoenicke, Jochen
and Leike, Jan
and Podelski, Andreas",
editor="Van Hung, Dang
and Ogawa, Mizuhito",
title="Linear Ranking for Linear Lasso Programs",
booktitle="Automated Technology for Verification and Analysis",
year="2013",
publisher="Springer International Publishing",
address="Cham",
pages="365--380",
isbn="978-3-319-02444-8"
}

@inproceedings{10.1007/11817963_37,
author = {Cook, Byron and Podelski, Andreas and Rybalchenko, Andrey},
title = {TERMINATOR: beyond safety},
year = {2006},
isbn = {354037406X},
publisher = {Springer-Verlag},
address = {Berlin, Heidelberg},
url = {https://doi.org/10.1007/11817963_37},
doi = {10.1007/11817963_37},
booktitle = {Proceedings of the 18th International Conference on Computer Aided Verification},
pages = {415–418},
numpages = {4},
location = {Seattle, WA},
series = {CAV'06}
}

@InProceedings{10.1007/978-3-642-39799-8_61,
author="Beyene, Tewodros A.
and Popeea, Corneliu
and Rybalchenko, Andrey",
editor="Sharygina, Natasha
and Veith, Helmut",
title="Solving Existentially Quantified Horn Clauses",
booktitle="Computer Aided Verification",
year="2013",
publisher="Springer Berlin Heidelberg",
address="Berlin, Heidelberg",
pages="869--882",
isbn="978-3-642-39799-8"
}

@InProceedings{10.1007/978-3-030-81685-8_35,
author="Unno, Hiroshi
and Terauchi, Tachio
and Koskinen, Eric",
editor="Silva, Alexandra
and Leino, K. Rustan M.",
title="Constraint-Based Relational Verification",
booktitle="Computer Aided Verification",
year="2021",
publisher="Springer International Publishing",
address="Cham",
pages="742--766",
isbn="978-3-030-81685-8"
}

@INPROCEEDINGS{1319598,
  author={Podelski, A. and Rybalchenko, A.},
  booktitle={Proceedings of the 19th Annual IEEE Symposium on Logic in Computer Science, 2004.}, 
  title={Transition invariants}, 
  year={2004},
  volume={},
  number={},
  pages={32-41},
  doi={10.1109/LICS.2004.1319598}}

@article{10.1145/322276.322287,
author = {Papadimitriou, Christos H.},
title = {On the complexity of integer programming},
year = {1981},
issue_date = {Oct. 1981},
publisher = {Association for Computing Machinery},
address = {New York, NY, USA},
volume = {28},
number = {4},
issn = {0004-5411},
url = {https://doi.org/10.1145/322276.322287},
doi = {10.1145/322276.322287},
journal = {J. ACM},
month = oct,
pages = {765–768},
numpages = {4}
}

\end{document}